\journal{HAL}
\newtheorem{theorem}{Theorem}
\newtheorem{definition}{Definition}
\newtheorem{lemma}{Lemma}
\newtheorem{iidassumption}{Assumption A\ignorespaces}
\newtheorem{assumption}{Assumption B\ignorespaces}
\newcommand{\iid}{i.i.d.}
\newcommand{\bM}{M(\psi)}
\newcommand{\bv}{v(\psi)}
\DeclareMathOperator*{\argmin}{arg\,min}
\date{October 2023}
\begin{document}

\begin{frontmatter}

\author[label1,label2]{Herbert Susmann\corref{cor1}}\ead{herbert.susmann@dauphine.psl.eu}
\cortext[cor1]{Corresponding author.}
\author[label3]{Antoine Chambaz}
\affiliation[label1]{organization={CEREMADE (UMR 7534), Université Paris-Dauphine PSL},
            addressline={Place du Maréchal de Lattre de Tassigny},
            city={Paris},
            postcode={75016},
            country={France}}
\affiliation[label3]{organization={MAP5 (UMR 8145), Université Paris Cité},
            postcode={75006},
            city={Paris},
            country={France}}

\title{Quantile Super Learning for independent and online settings with application to solar power forecasting}

\begin{abstract}
Estimating quantiles of an outcome conditional on covariates is of fundamental interest in statistics with broad application in probabilistic prediction and forecasting. We propose an ensemble method for conditional quantile estimation, Quantile Super Learning, that combines predictions from multiple candidate algorithms based on their empirical performance measured with respect to a cross-validated empirical risk of the quantile loss function. We present theoretical guarantees for both \iid\ and online data scenarios. The performance of our approach for quantile estimation and in forming prediction intervals is tested in simulation studies. Two case studies related to solar energy are used to illustrate Quantile Super Learning: in an \iid\ setting, we predict the physical properties of perovskite materials for photovoltaic cells, and in an online setting we forecast ground solar irradiance based on output from dynamic weather ensemble models.
\end{abstract}

\begin{keyword}
cross validation \sep online learning \sep quantile regression



\end{keyword}

\end{frontmatter}

\section{Introduction}

Estimating the quantiles of an outcome conditional on covariates is a foundational task in statistics. Many algorithms have been developed for that purpose, including versions of linear regression, generalized additive models, random forests, and neural networks, to name only a few  \citep{koenker2005quantreg, athey2019grf, cannon2011qrnn, fasiolo2020qgam}. On any particular dataset, however, it is almost never known a-priori which method will perform best. Ensemble algorithms combine the predictions of multiple candidate algorithms according to their empirical performance, obviating the need to choose between the available algorithms in advance. In particular, Super Learning combines the predictions of candidate algorithms according to their cross-validated risk with respect to a loss function chosen on a case by case basis depending on the task at hand \citep{vanderLaan2007superlearner}.

In this work we present the Quantile Super Learner (QSL), a novel ensemble learning algorithm tailored to estimating conditional quantiles. 
To illustrate the main ideas, suppose we have $n$ independent and identically distributed (\iid)\ observations $(O_1, \dots, O_n)$ of a generic variable $O = (X, Y)$ drawn from a law $P_0$, with $X \in \mathcal{X}$ a set of covariates and $Y \in \mathbb{R}$ a univariate outcome. Our goal is to estimate the $\alpha$-conditional quantile of $Y$ given $X$, denoted $\psi^\alpha_{P_0}(X)$ where $\psi^\alpha_{P_0} \in \Psi$ with $\Psi$ the set of all functionals mapping $\mathcal{X} \to \mathbb{R}$. At our disposal are a set of algorithms $\hat{\psi}_k^\alpha$, $k = 1, \dots, K$ that map a dataset to an estimator of the $\alpha$-conditional quantile. No assumptions are necessary about how these algorithms work, and in our simulations and case studies we use a variety methods including quantile regression, gradient boosting machines, and neural networks, among others. 

As we are not likely to know a-priori which algorithm will perform best for a particular dataset, we choose between them by evaluating their performance with respect to the quantile loss function $L^\alpha$, given for any $\psi \in \Psi$ by
\begin{align}
    (x,y) \mapsto L^\alpha(\psi)(x, y) &= \begin{cases}
        \alpha | y - \psi(x) |, & \text{ if } y > \psi(x) \\
        (1 - \alpha) | y - \psi(x) |, & \text{ if } y \leq \psi(x).
    \end{cases}
\end{align}
We choose this loss function because its expected value under $P_0$ is minimized by the true conditional quantile functional:
\begin{align}
    \psi^\alpha_{P_0} \in \argmin_{\psi \in \Psi} \mathbb{E}_{P_0}[L^\alpha(\psi)(X, Y)] = \argmin_{\psi \in \Psi} R_{P_0}^\alpha(\psi),
\end{align}
where we have defined $R_{P_0}^\alpha(\psi)$ to be the \textit{risk} of the algorithm $\psi$ with respect to $P_0$. If we had oracular knowledge of the true data generating distribution $P_0$ we could evaluate the true risk $R_{P_0}^\alpha$ directly for the output of each algorithm and choose the one with the lowest risk. In practice we do not have access to $P_0$, so we must approximate the true risk. The strategy used in Super Learning is to use a cross-validated risk as an estimator of the true risk, and then select the algorithm that minimizes this cross-validated risk. Our main theoretical results are in the form of a bound on the difference between the risk of the algorithm selected using the cross-validated risk and the risk of the algorithm that minimizes the true risk, following the strategy of \citep{wu2022huber}. Notably, these results are established without any regularity assumptions on the data generating distribution.

So far, we have discussed the case where the data are \iid\ draws from a probability law. However, in many scenarios \iid\ assumptions are not justified, such as for time-series data. The key difference between Super Learning in the \iid\ and sequential cases is in the cross-validation scheme used to calculate the empirical risk of the candidate algorithms. In the \iid\ case, $V$-fold cross-validation is typically used, in which the dataset is split uniformly at random into $V$ couples of training and test sets. For the online setting, the collection of all previously observed data is used as a training set and the next observation (or set of observations) is used as a test set. Our main theoretical result for the online setting is similar to that of the \iid\ result, providing bounds on the excess risk of the algorithm chosen based on minimizing the online cross-validated risk. However, for the online setting it is necessary to introduce mild regularity assumptions on the data-generating process. Using ideas from \cite{steinwart2011}, we make a margin assumption concerning the behavior of the conditional law around the quantile being estimated, which allows us to  establish excess risk bounds.

One of the potential use cases of the quantile Super Learner in both \iid\ and online settings is to form prediction intervals by separately estimating a lower and upper quantile. Intuitively, we would expect that if we can do a good job estimating each of these quantiles, then the prediction intervals will also perform well. However, there is no theoretical guarantee that such quantile estimates will yield prediction intervals with desired frequency characteristics in finite samples. We investigate this use case empirically in simulations and case studies.

\paragraph{Prior Work} Early proposals of combining predictions from multiple base learners include stacked generalization, described by \cite{wolpert1992stacking} and \cite{breiman1996stacking}. The theoretical foundations of the approach were formalized by \cite{vanderLaan2003unified}, \cite{vanderLaan2006cvepsilonnet}, and \cite{vandervaart2006oracles}; the name ``Super Learner" was subsequently coined by \cite{vanderLaan2007superlearner}. Early development of Super Learning primarily focused on estimating conditional means using a squared-error loss function in the context of \iid\ data. Since then, extensions to other loss functions have included the Area Under the Curve (AUC) loss function \citep{ledell2016auc} and the Huber loss function \citep{wu2022huber}, among others. Theory for online (sequential) ensemble learning within the Super Learning framework that incorporates the statistical dependence of the data was developed in \cite{benkeser2018onlinesuperlearner} and \cite{ecoto2021onlinesuperlearner}. \cite{fakoor2023qreg} provide a comprehensive account of ensemble methods for quantile estimation, including detailed empirical comparisons. Our work is in a similar vein, but provides formal theoretical guarantees for the proposed Super Learning based approach. \cite{sun2023} propose an online estimator for quantiles within a parametric model and using a smooth approximation of the quantile loss function. In this framework, they show that their ``renewable estimator" is consistent, asymptotically normal, and that it enjoys an oracle property. In contrast, our approach is fully non-parametric.

\paragraph{Outline} The rest of the paper unfolds as follows. In Section \ref{section:iid} we develop the QSL for \iid\ data and establish an oracle inequality showing that the estimator is asymptotically equivalent to the best performing candidate algorithm. In Section \ref{section:sequential} we extend the QSL to the setting of sequential data and establish similar oracle inequalities as in the \iid\ case. In Section \ref{section:simulations} we investigate the finite-sample performance of the QSL in simulations. In Section \ref{section:case-studies} we present two case studies related to solar energy: predicting the physical properties of perovskite materials for photovoltaic applications and forecasting solar irradiance based on the output of deterministic numerical weather prediction models.

\section{Independent setting}
\label{section:iid}
Our initial results develop a QSL in a setting where the observed data represent \iid\ draws from an underlying distribution. Let $(O_1, \dots, O_n)$ be $n$ \iid\ observations of a generic variable $O = (X, Y)$ from the law $P_0$ on $\mathcal{O} = \mathcal{X} \times \mathbb{R}$, where $X \in \mathcal{X}$ is a set of covariates and $Y \in \mathbb{R}$ is a univariate outcome. We assume that $P_0$ falls in statistical model $\mathcal{M}$. 
For all $P \in \mathcal{M}$ and $\alpha \in (0, 1)$, define the $\alpha$-quantile of $Y$ conditional on $X$ as the possibly set-valued functional
\begin{align}
\psi^\alpha_P(X) &:= \{ y \in \mathbb{R} : P(Y < y \mid X) \geq \alpha , P(Y \geq y \mid X) \geq 1 - \alpha \}.
\end{align}
Our goal is to estimate $x \mapsto \psi^\alpha_{P_0}(x)$ for a given $\alpha$, the conditional $\alpha$-quantile under the true data generating distribution $P_0$. For the \iid\ setting, we make two key simplifying assumptions:
\begin{iidassumption}[Unique Quantiles]
\label{assumption:unique-quantiles-iid}
    The quantile function $\psi_{P_0}^\alpha(x)$ is a singleton for $P_X$-almost all $x \in X$.
\end{iidassumption}
\begin{iidassumption}[Outcome Boundedness]
\label{assumption:boundedness}
It holds $P_0$ almost surely that $|Y| \leq C_0 < \infty$. Note that in practice, the value of $C_0$ does not need to be known.
\end{iidassumption}

Next, we establish that $\psi^\alpha_P$ is a minimizer of a particular loss function. Let $\Psi$ be the set of all (measurable) functions mapping $\mathcal{X}$ to $\mathbb{R}$. The quantile loss function $L^\alpha$ for the $\alpha$-quantile is given by
\begin{align}
    (x,y) \mapsto L^\alpha(\psi)(x, y) &:= \begin{cases}
        \alpha | y - \psi(x) |, & \text{ if } y > \psi(x) \\
        (1 - \alpha) | y - \psi(x) |, & \text{ if } y \leq \psi(x)
    \end{cases}
\end{align}
for any $(x, y) \in \mathcal{O}$ and $\psi \in \Psi$.
The \textit{risk} of $\psi$ under the loss function $L^\alpha$ relative to a distribution $P \in \mathcal{M}$ is defined as
\begin{align}
    \label{eq:risk}
    R^\alpha_P(\psi) := \mathbb{E}_P[L^\alpha(\psi)(O)] = PL^\alpha(\psi),
\end{align}
where we use the notation $Pf = \mathbb{E}_P[f(O)] = \int f dP$. Note that   the true conditional quantile function minimizes the quantile loss:
\begin{align}
    \psi^\alpha_{P_0} = \argmin_{\psi \in \Psi} R^\alpha_{P_0}(\psi).
\end{align}
This well-known fact serves as the basis of quantile regression \citep{gneiting2011quantiles, koenker2005quantreg}.

\subsection{Super Learning}
An \textit{algorithm} to learn $\psi^\alpha_{P_0}$ is a function mapping any finite set $\{ o_1, \dots, o_M \}$ of $M$ elements of $\mathcal{O}$, viewed as the measure $M^{-1} \sum_{m=1}^M \text{Dirac}(o_m)$, to an element of~$\Psi$. 
Suppose we have $K$ such algorithms $\widehat{\psi}^{\alpha}_{1}, \dots, \widehat{\psi}^{\alpha}_{K}$ which seek to learn $\psi^\alpha_{P_0}$.  Super Learning amounts to identifying which of the algorithms performs best.

\paragraph{Discrete Super Learner.} 
The discrete Super Learner identifies the best performing algorithm among the candidate algorithms as measured by their cross-validated risks. To formalize the cross-validation scheme, we introduce
$B_{n}\in\{0,1\}^{n}$, a random  vector drawn independently of $O_{1},
\ldots, O_{n}$ such that $\sum_{i=1}^{n} B_{n}(i) \approx np$ for some user-supplied proportion $p$. The observation $O_i$ falls in the training set if $B_n(i) = 0$, and in the testing set if $B_n(i) = 1$.
The $B_n$-specific training and testing datasets are represented by the empirical distributions $P_{n,B_{n}}^{0}$ and $P_{n,B_{n}}^{1}$. For instance, to implement $V$-fold cross-validation, we draw $B_{n}$ from the uniform distribution on $\{b_1, \ldots, b_{V}\} \subset \{0, 1\}^n$ where each $b_v$ satisfies $\sum_{i=1}^n b_v(i) \approx n/V$ (a proportion $(V - 1) / V$ of data are used for training, and the rest for testing) and, for every $1 \leq i \leq n$, $\sum_{v=1}^V b_v(i) = 1$ (each observation is used once for testing). 

The oracle cross-validated risk of algorithm $\widehat{\psi}^{\alpha}_{k}$ with respect to a distribution $P \in \mathcal{M}$ is then defined by
\begin{align}
    \label{eq:oracle-risk}
    \widetilde{R}^\alpha_{n,P}(\widehat{\psi}^{\alpha}_{k}) := \mathbb{E}_{B_n}\left[ P L^\alpha(\widehat{\psi}^{\alpha}_{k}(P_{n,B_n}^0))\right].
\end{align}
The empirical version of the oracle cross-validated risk is simply obtained by substituting $P_{n,B_n}^1$ for $P$ in \eqref{eq:oracle-risk}:
\begin{align}
    \widehat{R}^\alpha_n(\widehat{\psi}^{\alpha}_ k) := \mathbb{E}_{B_n}\left[ P_{n,B_n}^1 L^\alpha(\widehat{\psi}^{\alpha}_{k}(P_{n,B_n}^0))\right].
\end{align}
The discrete Super Learner selector minimizes the empirical cross-validated risk:
\begin{align}
    \widehat{\kappa}_n := \argmin_{k\in \llbracket K \rrbracket } \widehat{R}_n^\alpha(\hat{\psi}^{\alpha}_{k}),
\end{align}
using the notation $\llbracket K \rrbracket := \{1, 2, \dots, K \}$.
The corresponding algorithm $\widehat{\psi}^{\alpha}_{\widehat{\kappa}_n}$ is referred to as the discrete Super Learner.

\paragraph{Continuous Super Learner}
The continuous Super Learner considers a richer class of algorithms taking the form of convex combinations of the original candidate algorithms. Let $\Pi$ be the $K$-simplex: that is, the set of $\pi \in (\mathbb{R}_+)^K$ such that  $\sum_{k=1}^K \pi_k = 1$. The new generic candidate algorithms take the form
\begin{align}
    \label{convex-combination}
    \widehat{\psi}^{\alpha}_{\pi} := \sum_{k=1}^K \pi_k \widehat{\psi}^{\alpha}_{k}
\end{align}
for any $\pi \in \Pi$.
Let $\Pi_n$ be a finite subset of $\Pi$ such that the cardinality of $\Pi_n$ grows at most polynomially with $n$. The continuous Super Learner is found by finding the weights $\pi \in \Pi_n$ that minimize the empirical cross-validated risk:
\begin{align}
    \widehat{\pi}_n \in \argmin_{\pi \in \Pi_n} \widehat{R}^\alpha_n(\widehat{\psi}^{\alpha}_{\pi}).
\end{align}
The algorithm $\widehat{\psi}^{\alpha}_{\widehat{\pi}_n}$ is referred to as the continuous Super Learner.
Note that the continuous Super Learner is simply the discrete Super Learner when the collection of candidate algorithms is $\{ \widehat{\psi}_\pi^\alpha : \pi \in \Pi_n \}$. As such, we focus on analyzing the properties of the discrete Super Learner, as the results carry over to the continuous Super Learner.

\subsection{Oracle Inequalities}
We compare the discrete Super Learner against an oracle selector which identifies the candidate algorithm that has the best oracle cross-validated risk with respect to the law $P_0$:
\begin{align}
    \widetilde{\kappa}_n := \argmin_{k \in \llbracket K \rrbracket} \widetilde{R}^{\alpha}_{n,P_0}(\widehat{\psi}^{\alpha}_{k}).
\end{align}
The corresponding algorithm $\widehat{\psi}^\alpha_{\widetilde{\kappa}_n}$ is referred to as the oracle Super Learner.

Let $\widetilde{\psi}^\alpha_{P_0}$ denote the oracle algorithm that constantly outputs the true conditional $\alpha$-quantile $\psi^\alpha_{P_0}$. 
Our theoretical results compare the excess risk of the discrete Super Learner to the excess risk of the oracle Super Learner, that is
\begin{align}
    \widetilde{R}_{n,P_0}^\alpha(\widehat{\psi}^\alpha_{\widehat{\kappa}_n}) - \widetilde{R}_{n,P_0}^\alpha(\widetilde{\psi}^\alpha_{P_0}) \text{\,  vs.  \,} \widetilde{R}_{n,P_0}^\alpha(\widehat{\psi}^\alpha_{\widetilde{\kappa}_n}) - \widetilde{R}_{n,P_0}^\alpha(\widetilde{\psi}^\alpha_{P_0}),
\end{align}
where 
\begin{align}
    \widehat{R}^\alpha_{n,P_0}(\widetilde{\psi}^\alpha_{P_0}) &:= \mathbb{E}_{B_n}\left[ P_{n,B_n}^1 L^\alpha(\psi^\alpha_{P_0}) \right].
\end{align}

We are now ready to state the main result of this section, an excess risk bound for the discrete Super Learner.
\begin{theorem}[Excess risk bounds for discrete Super Learner]
    \label{theorem:iid-oracle-inequality}
    Assume that the number of candidate algorithms grows at most polynomially in $n$: that is, $K = O(n^a)$ for some $a > 0$. Also assume that, for all $k \in \llbracket K \rrbracket$,  $\hat{\psi}^\alpha_k$ only outputs functions which are uniformly bounded by $C_0$. Then
    \begin{align}
        \mathbb{E}_{P_0}\left[\widetilde{R}_{n,P_0}^\alpha(\widehat{\psi}^\alpha_{\widehat{\kappa}_n}) - \widetilde{R}_{n,P_0}^\alpha(\widetilde{\psi}^\alpha_{P_0})\right] \leq \mathbb{E}_{P_0}\left[\widetilde{R}_{n,P_0}^\alpha(\widehat{\psi}^\alpha_{\widetilde{\kappa}_n}) - \widetilde{R}_{n,P_0}^\alpha(\widetilde{\psi}^\alpha_{P_0})\right] + O\left( \frac{\log(n)}{n^{1/2}} \right).
    \end{align}
\end{theorem}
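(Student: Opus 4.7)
The plan is to adapt the finite-sample oracle inequality template of \citep{vandervaart2006oracles}, as used for the Huber loss by \citep{wu2022huber}, to the quantile loss. Since Assumption A\ref{assumption:boundedness} gives $|Y|\leq C_0$ and each $\widehat\psi^\alpha_k$ takes values in $[-C_0,C_0]$ by hypothesis, the pointwise bound $|L^\alpha(\psi)(x,y)| \leq 2C_0$ holds for every candidate. The whole argument then reduces to a uniform concentration bound, over the $K$ candidates, of a cross-validated empirical process with bounded increments. The first step is the deterministic excess-risk decomposition: using the empirical optimality $\widehat R_n^\alpha(\widehat\psi^\alpha_{\widehat\kappa_n}) \leq \widehat R_n^\alpha(\widehat\psi^\alpha_{\widetilde\kappa_n})$ and adding and subtracting $\widehat R_n^\alpha$ at both $\widehat\kappa_n$ and $\widetilde\kappa_n$ yields
\begin{equation*}
\widetilde R_{n,P_0}^\alpha(\widehat\psi^\alpha_{\widehat\kappa_n}) - \widetilde R_{n,P_0}^\alpha(\widetilde\psi^\alpha_{P_0}) \leq \widetilde R_{n,P_0}^\alpha(\widehat\psi^\alpha_{\widetilde\kappa_n}) - \widetilde R_{n,P_0}^\alpha(\widetilde\psi^\alpha_{P_0}) + 2 \max_{k \in \llbracket K \rrbracket} \bigl|\widehat R_n^\alpha(\widehat\psi^\alpha_k) - \widetilde R_{n,P_0}^\alpha(\widehat\psi^\alpha_k)\bigr|.
\end{equation*}
Taking $\mathbb{E}_{P_0}$ on both sides, the theorem reduces to proving that the expected maximum on the right is $O(\log(n)/n^{1/2})$.

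Next I would control each deviation by conditioning on the cross-validation split $B_n$ and on the training fold $P^0_{n,B_n}$. Conditionally on $(B_n,P^0_{n,B_n})$, the function $\widehat\psi^\alpha_k(P^0_{n,B_n})$ is deterministic with values in $[-C_0,C_0]$, and the test fold $P^1_{n,B_n}$ is an iid sample from $P_0$ of size of order $n$, independent of the training fold. The random variable $(P^1_{n,B_n} - P_0)L^\alpha(\widehat\psi^\alpha_k(P^0_{n,B_n}))$ is thus a centred iid sum of summands bounded in absolute value by $2C_0$, and Hoeffding's inequality provides a conditional subgaussian tail with scale $\asymp n^{-1/2}$. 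A union bound over $k \in \llbracket K \rrbracket$ and over the finitely many realisations of $B_n$, combined with $K=O(n^a)$ (so $\log K = O(\log n)$) and a standard tail-to-expectation integration, yields the required $O(\log(n)/n^{1/2})$ bound on the expected maximum, which plugged back into the decomposition closes the proof.

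The chief obstacle is that $\widehat R_n^\alpha$ is \emph{not} a genuine iid empirical average, because each summand $L^\alpha(\widehat\psi^\alpha_k(P^0_{n,B_n}))(O_i)$ evaluated on the test fold depends on the training fold through $\widehat\psi^\alpha_k$; direct application of concentration inequalities is therefore not allowed. The conditioning trick above is precisely what restores conditional independence between the (fixed, given the training fold) integrand and the (iid, given the training fold) test observations, and this is the only place where the split-sample structure of cross-validation is essential. A secondary subtlety is the non-smoothness of the quantile loss, but this is irrelevant at the present slow $n^{-1/2}$ rate: only the pointwise bound $|L^\alpha(\psi)(x,y)| \leq 2C_0$ is used, which is why no regularity assumption on $P_0$ appears in the statement. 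Obtaining a fast rate would require a margin/Bernstein-type variance bound relating $\mathrm{Var}_{P_0}(L^\alpha(\psi) - L^\alpha(\psi^\alpha_{P_0}))$ to the excess risk, which is precisely the type of regularity assumption that will be imposed in the sequel for the online case.
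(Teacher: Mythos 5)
Your argument is correct, but it is not the route the paper takes. The paper proves Theorem \ref{theorem:iid-oracle-inequality} by exhibiting a Bernstein pair for the quantile loss (its Lemma \ref{lemma:bernstein-pair}, with second Bernstein number $1.5\,M(\psi)R^\alpha_P(\psi)$, i.e.\ variance controlled by the risk itself) and then invoking the cross-validated oracle inequality of \citet[Theorem 2.3]{vandervaart2006oracles} (restated as Theorem \ref{theorem:general-inequality}), following \citet[Theorem 2]{wu2022huber}: one gets a $(1+2\delta)$-multiplicative comparison plus a remainder, chooses $\delta\asymp n^{-1/2}$, and absorbs $2\delta$ times the (bounded) oracle risk into the $O(\log(n)/n^{1/2})$ term. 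You instead use the elementary selector decomposition $\widetilde R_{n,P_0}^\alpha(\widehat\psi^\alpha_{\widehat\kappa_n})\leq \widetilde R_{n,P_0}^\alpha(\widehat\psi^\alpha_{\widetilde\kappa_n})+2\max_{k\in\llbracket K\rrbracket}|\widehat R_n^\alpha(\widehat\psi^\alpha_k)-\widetilde R^\alpha_{n,P_0}(\widehat\psi^\alpha_k)|$ and control the maximum by conditioning on the split and training fold, Hoeffding's inequality, and a maximal/union bound with $\log K=O(\log n)$; this is self-contained, needs only the $2C_0$ envelope, and in fact yields the slightly sharper rate $\sqrt{\log(n)/n}$, which implies the stated bound. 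What the paper's heavier machinery buys is the risk-dependent remainder of the van der Vaart--Dudoit--van der Laan inequality (the $v(\psi)\lesssim M(\psi)R^\alpha_P(\psi)$ structure), which permits multiplicative-form oracle inequalities and potentially faster, risk-adaptive remainders, and it lets the authors reuse an established theorem verbatim; your route trades that flexibility for transparency at the slow $n^{-1/2}$ scale, and your closing remark that no margin condition is needed here is consistent with the paper. One small caveat: rather than a union bound over the realisations of $B_n$ (which is harmless for $V$-fold but would not scale to Monte-Carlo splitting schemes with very many splits), it is cleaner to bound $\max_k|\mathbb{E}_{B_n}[\cdot]|\leq \mathbb{E}_{B_n}[\max_k|\cdot|]$ and apply the conditional subgaussian maximal inequality for each fixed split, using that each split puts roughly $np$ observations in the test fold.
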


\section{Sequential setting}
\label{section:sequential}
In the sequential setting we gain access to the observations in batches.  In particular, we are interested in the setting where, at each time point, we gain access to a new batch of observations, one for each element in an index set $\mathcal{J}$. For example, we may have multiple time series corresponding to several locations, where each location generates a new observation at each timepoint. Note that if $|\mathcal{J}| = 1$, then the problem reduces to the case where we observe a single time series.

Let $(\bar{O}_{t})_{t\geq 1}$ be a time-ordered sequence where $t$ indexes time. Each $\bar{O}_t$ is the set of observations $\bar{O}_t = ( O_{j, t} : j \in \mathcal{J})$. The observations then decompose as $O_{j,t} = (X_{j,t}, Y_{j,t})$ for $j \in \mathcal{J}, t \geq 1$. Let $P_0$ be the joint law of the observed data, which we assume falls in a statistical model $\mathcal{M}$. For all $t \geq 2$, introduce the $\sigma$-field $F_{t-1} := \sigma \left( O_{j, t^\prime} : j \in \mathcal{J}, 1 \leq \tau < t \right)$ generated by past observations (with $F_0 := \emptyset$ by convention). 
For all $P \in \mathcal{M}$ and $(j,\tau) \in \mathcal{J} \times \mathbb{N}^*$ define the $\alpha$-quantile of $Y$ conditional on $X$ at location $j$ and time $\tau$ as the possibly set-valued 
\begin{align}
\psi^\alpha_{P, j, t}(X_{j,t}) &:= \{ y \in \mathbb{R} : P(Y_{j,t} < y \mid X_{j,t}) \geq \alpha, P(Y_{j,t} \geq y \mid X_{j,t}) \geq 1 - \alpha \},
\end{align}
where $\alpha \in (0, 1)$ is taken as fixed. The sequential setting requires several more assumptions in addition to those adopted in the \iid\ case. First, we make the simplifying assumption of common support of the covariates across all locations and time points.
\begin{assumption}[Common Support]
    \label{assumption:common-support}
    For all $P \in \mathcal{M}$ there exists $\mathcal{S}$ such that, for all $(j, \tau) \in \mathcal{J} \times \mathbb{N}^*$,
    \begin{align}
        \mathrm{Supp}\left( P_{X_{j,\tau}} \right) = \mathcal{S},
    \end{align}
    where $P_{X_{j,\tau}}$ is the marginal law of $X$ under $P$ at location $j$ and time $\tau$.
\end{assumption}
As in the \iid\ setting, we will also assume the existence of unique quantiles and that the outcomes are uniformly bounded.
\begin{assumption}[Unique Quantiles]
\label{assumption:unique-quantiles-online}
    For all $(j,\tau) \in \mathcal{J} \times \mathbb{N}^*$, the $\alpha$-quantile  $\psi_{P_0}^\alpha(x)$ is a singleton for all $x \in \mathcal{S}$.
\end{assumption}
\begin{assumption}[Outcome Boundedness]
\label{assumption:online-boundedness}
It holds $P_0$-almost surely that, for all $(j,\tau) \in \mathcal{J} \times \mathbb{N}^*$, $|Y_{j,t}| \leq C_0 < \infty$.
\end{assumption}
We also make a Markovian assumption that all information about the outcome at a particular location and time point is encoded in prior observations and the covariates for that location and time point.
\begin{assumption}[Markov]
    \label{assumption:online-markov}
    For every $(j, \tau) \in \mathcal{J} \times \mathbb{N}^*$, it holds $P_0$-almost surely that
    \begin{align}
        P_0\left( Y_{j,\tau} | (X_{j,\tau} : j\in \mathcal{J}) \right) = P_0\left( Y_{j,\tau} | X_{j,\tau} \right).
    \end{align}
\end{assumption}
Finally, the next assumption guarantees that we can learn the quantile function based on the observed time series $\bar{O}_1, \dots, \bar{O}_t$.
\begin{assumption}[Stationarity]
    \label{assumption:stationarity}
    There exists $\Psi^\alpha_{P_0} \in \Psi$ such that, for every $(j,\tau) \in \mathcal{J} \times \mathbb{N}^*$, 
    \begin{align}
        \argmin_{\psi \in \Psi} \mathbb{E}_{P_0}[L^\alpha(\psi)(O_{j,\tau})] = \psi^\alpha_{P_0}.
    \end{align}
    (Informally, the assumption states that $\Psi^\alpha_{P_{0},j,\tau} = \Psi^\alpha_{P_0}$ for all $(j,\tau) \in \mathcal{J} \times \mathbb{N}^*$.)
\end{assumption}

We prove the oracle inequalities for Super Learning in the sequential setting by making a regularity assumption on the $P_0$-conditional laws given $F_t$. In order to state the assumption, we need the following two definitions drawn from \citep{steinwart2011}.
\begin{definition}[Quantiles of type $q$ \citep{steinwart2011}]
    \label{def:quantile-type-q}
    Let $Q$ be a distribution with $\mathrm{Supp}(Q) \subset[-1, 1]$. Set arbitrarily $\alpha \in (0, 1)$, let $\psi_Q^\alpha := \{ t \in \mathbb{R} : Q((-\infty, t]) \geq \alpha, Q([t, \infty)) \geq 1 - \alpha \}$ be the $\alpha$-quantile of $Q$, and assume that $\psi_Q^\alpha$ is a singleton. 
    The distribution $Q$ is said to have an $\alpha$-quantile of type $q \in (1, \infty)$ if there exist constants $\alpha_Q \in (0, 2]$ and $b_Q > 0$ such that
    \begin{align}
        Q((\psi_Q^\alpha - s, \psi_Q^\alpha)) &\geq b_Q s^{q-1}, \\
        Q((\psi_Q^\alpha, \psi_Q^\alpha + s)) &\geq b_Q s^{q-1}
    \end{align}
    for all $s \in [0, \alpha_Q]$. We also define $\gamma_Q = b_Q \alpha_Q^{q-1}$.
\end{definition}
Definition \ref{def:quantile-type-q} applies to a distribution with support on a subset of $\mathbb{R}$. The next definition is an extension to distributions defined on a subset of $\mathcal{X} \times \mathbb{R}$. 

\begin{definition}[Quantiles of $p$-average type $q$ \citep{steinwart2011}]
    \label{def:quantile-p-average-type-q}
    Let $p \in (0, \infty]$, $q \in [1, \infty)$, and $Q$ be a distribution on $\mathcal{X} \times \mathbb{R}$ with marginal distribution $Q_X$ of $\mathcal{X}$. Assume that $\mathrm{Supp}(Q(\cdot \mid X = x)) \subset [-1, 1]$ for $Q_X$-almost all $x \in X$. Then $Q$ is said to have an $\alpha$-quantile of $p$-average type $q$ if $Q(\cdot \mid X = x)$ has an $\alpha$-quantile of type $q$ for $Q_X$-almost all $x \in \mathcal{X}$, and if the function $\gamma : \mathcal{X} \to [0, \infty]$ given for $Q_X$-almost all $x \in \mathcal{X}$ by
    \begin{align}
        \gamma(x) := \gamma_{Q(\cdot \mid X = x)} := b_{Q(\cdot \mid X = x)} \alpha_{Q(\cdot \mid X = x)}^{q-1}
    \end{align}
    (as defined in Definition \ref{def:quantile-type-q}) is such that $\gamma^{-1}$ admits a finite moment $\|\gamma^{-1}\|_{p,Q_X}$ of order $p$ under $Q_X$.
\end{definition}

Now we are ready to state the final assumption we use for the oracle inequality in the sequential setting.

\begin{assumption}[Regularity]
    \label{assumption:online-regularity}
    For every $(j,\tau) \in \mathcal{J} \times \mathbb{N}^*$, the $P_0$-conditional law of $O_{j,\tau}$ given $F_{\tau-1}$ has an $\alpha$-quantile of $p$-average type $q$. Moreover, the collection of $\| \gamma^{-1} \|_{p,Q_X}$, where $Q$ ranges over the $P_0$-conditional laws of $O_{j,\tau}$ given $F_{\tau - 1}$, is uniformly bounded by a constant $\Gamma > 0$.
\end{assumption}

\subsection{Super Learner}
\paragraph{Discrete Super Learner}
Suppose as in the \iid\ setting that we have $K$ algorithms $\widehat{\psi}^{\alpha}_1, \dots, \widehat{\psi}^\alpha_K$ to learn $\psi^\alpha_{P_0}$. Each $\widehat{\psi}^\alpha_k$ is a function mapping any finite sequence $\bar{o}_1, \dots, \bar{o}_t$ to an element of $\Psi$. Let $P_{t} := \sum_{\tau=1}^t \mathrm{Dirac}(\bar{o}_\tau)$ be the empirical distribution of the data up to time $t$. Define the $L^\alpha$-loss of $\psi \in \Psi$ with respect to a batch of observations $\bar{o}_t$ as:
\begin{align}
    \bar{L}^\alpha(\psi)(\bar{o}_t) := \frac{1}{|\mathcal{J}|} \sum_{j \in \mathcal{J}} L^\alpha(\psi)(o_{j,t}).
\end{align}
The oracle risk of an algorithm $\widehat{\psi}^\alpha_k$ up to time $t \geq 1$ with respect to a distribution $P \in \mathcal{M}$ is defined as:
\begin{align}
    \widetilde{R}^{\alpha}_{t,P}(\widehat{\psi}^\alpha_k) := \frac{1}{t} \sum_{\tau = 1}^t \mathbb{E}_P \left[ \bar{L}^\alpha(\widehat{\psi}^\alpha_{k}(P_{\tau - 1}))(\bar{O}_\tau) \middle| F_{\tau - 1} \right],
\end{align}
where by convention $\hat{\psi}_k^\alpha(P_{t-1})$ produces an arbitrary constant function (e.g. always zero) for the case $t = 1$.
The empirical risk up to time $t$ of $\hat{\psi}_k^\alpha$ is defined as the empirical counterpart of its oracle risk:
\begin{align}
    \label{eq:online-empirical-risk}
    \widehat{R}^{\alpha}_{t}(\widehat{\psi}_k^\alpha) &:= 
    \frac{1}{t} \sum_{\tau=1}^t \bar{L}^\alpha(\widehat{\psi}^\alpha_k(P_{\tau - 1}))(\bar{O}_\tau) \\
     &= \frac{1}{t|\mathcal{J}|} \sum_{\tau = 1}^t \sum_{j \in \mathcal{J}} L^\alpha(\widehat{\psi}^\alpha_k(P_{\tau - 1}))(O_{j,\tau}).
    \end{align}
At each $t \geq 1$, the discrete online Super Learner selector is formed by finding the algorithm that minimizes the empirical risk up to time $t$:
\begin{align}
    \widehat{\kappa}_t = \argmin_{k \in \llbracket K \rrbracket} \widehat{R}^{\alpha}_t(\widehat{\psi}^\alpha_k).
\end{align}
The algorithm $\hat{\psi}^\alpha_{\hat{\kappa}_t}$ is referred to as the online discrete Super Learner.

\paragraph{Continuous Super Learner}
Convex combinations of the candidate algorithms are formed as in the \iid\ setting \eqref{convex-combination}, yielding new candidate algorithms $\widehat{\psi}^{\alpha}_{\pi}$ (for any $\pi \in \Pi_n)$.  
The continuous online Super Learner selector is then formed by finding weights that minimize the empirical risk in hindsight:
\begin{align}
    \widehat{\pi}_t \in \argmin_{\pi \in \Pi_n} \widehat{R}^\alpha_t(\widehat{\psi}^{\alpha}_{\pi}).
\end{align}
The algorithm $\hat{\psi}^\alpha_{\hat{\pi}_t}$ is referred to as the online continuous Super Learner.

\subsection{Oracle Inequalities}
In this section we establish oracle inequalities for the discrete online Super Learner, based on results for online Super Learner established in \citep{ecoto2021onlinesuperlearner}. The main result is presented below, with the proof to be found in the appendix.

\begin{theorem}
    \label{theorem:online-main-result}
    Assume that the number of candidate algorithms grows at most polynomially in $t$, that is, $K = O(t^a)$ for some $a > 0$. Also assume that, for all $k \in \llbracket K \rrbracket$,  $\hat{\psi}^\alpha_k$ only outputs functions which are uniformly bounded by $C_0$. Then, for $t$ large enough, 
    \begin{align}
        \mathbb{E}_{P_0}\left[ \widetilde{R}^\alpha_{t,P_0}(\widehat{\psi}^\alpha_{\hat{\kappa}_t}) - \widetilde{R}_{t,P_0}^\alpha(\psi_{P_0}^\alpha) \right] \leq 
        \mathbb{E}_{P_0}\left[\widetilde{R}^\alpha_{t, P_0}(\widehat{\psi}^\alpha_{\widetilde{\kappa}_t}) - \widetilde{R}_{t,P_0}^\alpha(\widehat{\psi}_{P_0}^\alpha) \right]
        + O\left( \frac{\log(\log(t))}{t^{1/2}} \right).
    \end{align}
\end{theorem}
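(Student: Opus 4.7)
The plan is to adapt the oracle-inequality machinery for online Super Learning developed in \citep{ecoto2021onlinesuperlearner}, specializing it to the quantile loss under the margin regularity of Assumption~\ref{assumption:online-regularity}. The overarching argument is the usual empirical-risk-minimization decomposition. Since $\widehat{R}^{\alpha}_{t}(\widehat{\psi}^\alpha_{\widehat{\kappa}_t}) \leq \widehat{R}^{\alpha}_{t}(\widehat{\psi}^\alpha_{\widetilde{\kappa}_t})$ by definition of $\widehat{\kappa}_t$, the excess risk difference telescopes as
\begin{align}
    \widetilde{R}^{\alpha}_{t,P_0}(\widehat{\psi}^\alpha_{\widehat{\kappa}_t}) - \widetilde{R}^{\alpha}_{t,P_0}(\widehat{\psi}^\alpha_{\widetilde{\kappa}_t}) \leq \bigl[\widetilde{R}^{\alpha}_{t,P_0}(\widehat{\psi}^\alpha_{\widehat{\kappa}_t}) - \widehat{R}^{\alpha}_{t}(\widehat{\psi}^\alpha_{\widehat{\kappa}_t})\bigr] + \bigl[\widehat{R}^{\alpha}_{t}(\widehat{\psi}^\alpha_{\widetilde{\kappa}_t}) - \widetilde{R}^{\alpha}_{t,P_0}(\widehat{\psi}^\alpha_{\widetilde{\kappa}_t})\bigr],
\end{align}
so it suffices to control the uniform fluctuation $\max_{k \in \llbracket K \rrbracket} \bigl| \widehat{R}^{\alpha}_{t}(\widehat{\psi}^\alpha_k) - \widetilde{R}^{\alpha}_{t,P_0}(\widehat{\psi}^\alpha_k) \bigr|$ in expectation.

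For each fixed $k$, the centered increments $D_{k,\tau} := \bar{L}^\alpha(\widehat{\psi}^\alpha_k(P_{\tau-1}))(\bar{O}_\tau) - \mathbb{E}_{P_0}[\bar{L}^\alpha(\widehat{\psi}^\alpha_k(P_{\tau-1}))(\bar{O}_\tau) \mid F_{\tau-1}]$ form a bounded martingale-difference sequence adapted to $(F_\tau)_{\tau \geq 0}$; boundedness follows from Assumption~\ref{assumption:online-boundedness} together with the hypothesized uniform bound on $\widehat{\psi}^\alpha_k$. A Freedman- or Bernstein-type martingale concentration inequality applied to $\sum_{\tau=1}^t D_{k,\tau}$ yields, with high probability, a deviation of order $\sqrt{V_{k,t} \log K / t} + \log K / t$, where $V_{k,t}$ is the predictable variance process associated with these increments.

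The role of Assumption~\ref{assumption:online-regularity} is to convert $V_{k,t}$ into a quantity governed by the oracle excess risk. Following the calibration inequalities for quantile losses of $p$-average type $q$ established in \citep{steinwart2011}, one obtains, uniformly over the $F_{\tau-1}$-conditional laws, a bound of the schematic form
\begin{align}
    \mathbb{E}_{P_0}\bigl[(L^\alpha(\psi) - L^\alpha(\psi^\alpha_{P_0}))^2 \bigm| F_{\tau-1}\bigr] \leq C(\Gamma, p, q, C_0)\, \bigl(R^\alpha_{P_0}(\psi) - R^\alpha_{P_0}(\psi^\alpha_{P_0})\bigr)^{\vartheta}
\end{align}
for an exponent $\vartheta = \vartheta(p,q) \in (0,1]$. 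Summed along the sequence, this shows that $V_{k,t}$ is dominated by a power of the oracle excess risk, so the Bernstein bound becomes a self-bounding inequality in that excess risk that can be solved explicitly to deliver a fast-rate contribution.

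A union bound over the $K = O(t^a)$ candidate algorithms contributes the $\log K = O(\log t)$ factor, and the improvement from $\log(t)$ to $\log\log(t)$ in the stated rate comes from invoking a time-uniform, self-normalized martingale inequality (of law-of-the-iterated-logarithm type, as used in \citep{ecoto2021onlinesuperlearner}) that accommodates the data-dependent variance $V_{k,t}$ without incurring an additional logarithmic factor. Taking expectations and using Assumption~\ref{assumption:stationarity} to equate the conditional minimizers across $(j,\tau)$ finishes the argument. The main obstacle is the careful verification of the conditional calibration inequality above uniformly in $\tau$ (for which Assumption~\ref{assumption:online-regularity} is precisely tailored) and the bookkeeping required to combine it with the peeling-over-time martingale concentration so that the final expectation collapses to the advertised $O(\log\log(t)/t^{1/2})$ rate for $t$ large enough.
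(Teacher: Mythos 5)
Your sketch assembles the same ingredients the paper uses: a martingale Bernstein-type oracle inequality for online Super Learning (the paper imports it wholesale from Ecoto et al.\ as Theorem~\ref{theorem:online-oracle-inequality}, after verifying its conditions via Lemmas~\ref{lemma:online-boundedness}--\ref{lemma:online-variance-bound-2}) and the Steinwart--Christmann calibration bound, which under Assumption B\ref{assumption:online-regularity} is exactly Lemma~\ref{lemma:online-variance-bound}, i.e.\ $\mathbb{E}_{P_0}[(\Delta L^\alpha(\psi)(O_{j,t}))^2\mid F_{t-1}]\leq \nu\,(\mathbb{E}_{P_0}[\Delta L^\alpha(\psi)(O_{j,t})\mid F_{t-1}])^{\beta}$ (note this applies to the loss centered at $\psi^\alpha_{P_0}$, not to your increments $D_{k,\tau}$ as written). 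However, two steps of your argument do not hold up as stated. First, your explanation of the $\log\log t$ factor is misattributed: a union bound over $K=O(t^a)$ algorithms costs $\log K\asymp\log t$, and a time-uniform (LIL-type) martingale inequality gives uniformity in time, not in $k$, so it cannot erase that factor. In the paper the iterated logarithm comes from the peeling parameter $N$ in Ecoto et al.'s bound, chosen with $N\in[3\log t,4\log t]$, which enters only through $\log(2KN)$; the proof's final remainder is of order $\log(8K\log t)/t^{1/2}$ (so the displayed $\log\log t$ rate is really the fixed-$K$ reading of that bound). Second, your additive decomposition controlled by $\max_{k\in\llbracket K\rrbracket}|\widehat{R}^\alpha_t(\widehat{\psi}^\alpha_k)-\widetilde{R}^\alpha_{t,P_0}(\widehat{\psi}^\alpha_k)|$ cannot simultaneously give leading constant $1$ and exploit the variance--excess-risk link: weighting the deviation of each $k$ by its own variance, as the self-bounding step requires, yields the oracle inequality only in multiplicative form,
\begin{align}
\mathbb{E}_{P_0}\left[\widetilde{R}^\alpha_{t,P_0}(\widehat{\psi}^\alpha_{\hat\kappa_t})-\widetilde{R}^\alpha_{t,P_0}(\psi^\alpha_{P_0})\right]\leq(1+2\delta)\,\mathbb{E}_{P_0}\left[\widetilde{R}^\alpha_{t,P_0}(\widehat{\psi}^\alpha_{\widetilde\kappa_t})-\widetilde{R}^\alpha_{t,P_0}(\psi^\alpha_{P_0})\right]+\mathrm{Rem}(\delta),
\end{align}
and the actual content of the paper's proof of this theorem is precisely the tuning you dismiss as bookkeeping: set $\delta=t^{-1/2}$, check $C_1(\delta)\asymp\gamma\,t^{\beta/2}$ so that the exponent $1/(2-\beta)$ cancels the $t^{\beta/2}$ and $\mathrm{Rem}(\delta)\lesssim\log(8K\log t)/t^{1/2}$ uniformly in $\beta$, and absorb $2\delta\,\mathbb{E}_{P_0}[\widetilde{R}^\alpha_{t,P_0}(\widehat{\psi}^\alpha_{\widetilde\kappa_t})-\widetilde{R}^\alpha_{t,P_0}(\psi^\alpha_{P_0})]\leq 2C_0t^{-1/2}$ using boundedness to recover the constant-$1$ statement. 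Without these steps your argument yields neither the stated leading constant nor the $t^{-1/2}$ rate independent of the margin exponent.
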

Note that the number of locations $|\mathcal{J}|$ is hidden in the ``large enough" $t$ (and not in the order term). Specifically, the larger is the number of locations $|\mathcal{J}|$, the smaller $t$ needs to be for the inequality to hold.

\begin{table}
    \centering
    \small
    \begin{tabular}{|l|l|p{4cm}|}
        \hline
         Algorithm & R package & Citations \\
         \hline
         Distributional Random Forest (DRF) & \texttt{drf} & \cite{michel2021drf, drf2022} \\
         Gradient Boosting Machine (GBM) & \texttt{lightgbm} & \cite{shi2023lightgbm} \\
         Quantile Generalized Additive Models (QGAM) & \texttt{qgam} & \cite{fasiolo2020qgam,fasiolo2021qgampackage} \\
         Quantile Random Forest (QRF) & \texttt{grf} & \cite{athey2019grf} \\
         Quantile Regression (QReg) & \texttt{quantreg} & \cite{koenker2005quantreg,koenker2017handbook} \\
         Quantile Regression Neural Network (QRNN) & \texttt{qrnn} & \cite{cannon2011qrnn,cannon2018qrnn} \\
         \hline
    \end{tabular}
    \caption{Example library of candidate algorithms for \iid\ conditional quantile estimation.}
    \label{tab:base_learners}
\end{table}

\section{Simulation Studies}
\label{section:simulations}
In this section we investigate the finite sample performance of the QSL in the \iid\ and sequential settings. A natural application of quantile estimation is in forming prediction intervals from estimates of a lower and upper quantile, which we investigate in both settings. We used the \texttt{sl3} \texttt{R} package \citep{coyle2021sl3-rpkg} to implement the \iid\ QSL algorithm. For the sequential setting, we compare the online quantile Super Learner to two algorithms from the online aggregation of experts literature: Exponentially Weighted Average (EWA; \cite{cesalugosi2006onlinelearning}) and Bernstein Online Aggregation (BOA; \cite{wintenberger2017boa}). Both algorithms are implemented in the \texttt{opera} R package \citep{opera2023r}, which uses an adaptive procedure to fine tune their learning rates. We added the QSL as an additional method to the \texttt{opera} package to facilitate its use and comparison with other methods. Code for the simulations and case studies can be found at \url{https://github.com/herbps10/QuantileSuperLearner}.

\subsection{Independent setting}
Simulated datasets consisted of $N_1 \in \{ 250, 500, 1000 \}$ \iid\ draws $(X_i, Y_i)$, $i = 1, \dots, N_1$, of a generic variable $(X, Y)$. The covariates forming $X = (X_1, X_2, X_3, X_4, X_5)$ were drawn independently from the uniform distribution on $[0, 1]$. Conditional on $X$, the outcome $Y$ was chosen to be a linear combination of smooth and non-smooth functions of the covariates:
\begin{align}
    Y = \sin(2X_1) + |X_2| - 0.5 X_1 X_3 + \lfloor X_4\rfloor + \epsilon,
\end{align}
where $\epsilon \sim N(0, 0.1)$. An additional $N_2 = 1000$ observations $(X_i, Y_i)$, $i = N_1 + 1, \dots, N_1 + N_2$, were drawn from the same data generating process and used as a validation set to evaluate the performance of the Super Learner in predictions of unseen data. Overall, $50$ learning and testing datasets were generated for each sample size $N_1$.

We estimated the $\alpha$-quantile with $\alpha \in \{ 0.025, 0.05, 0.1, 0.5, 0.9, 0.95, 0.975 \}$ for each simulated dataset using separate QSLs. The candidate algorithms included gradient boosting machines, quantile regression, quantile neural networks, quantile random forests, and quantile generalized additive models (see Table~\ref{tab:base_learners} for references). We compared the QSL against the candidate algorithms by calculating their empirical risks with respect to the testing dataset. To do so, we defined the empirical risk of an algorithm $\widehat{\psi}^\alpha$ as
\begin{align}
    \label{eq:emp-risk-iid}
    \mathrm{EmpRisk}(\alpha) := \frac{1}{N_2} \sum_{i=N_1 + 1}^{N_1 + N_2} L^\alpha(\widehat{\psi}^\alpha(P_{N_1}))(X_i, Y_i).
\end{align}
In addition, 80\%, 90\%, and 95\% prediction intervals were formed using the 10\%, 5\%, and 2.5\% and 90\%, 95\%, and 97.5\% quantile estimates as the lower and upper interval bounds, respectively. The empirical coverage of a $(1-\beta) \times 100\%$ prediction interval built using algorithms $\widehat{\psi}^{\beta/2}$ and $\widehat{\psi}^{1 - \beta/2}$ was defined as
 \begin{align}
    \label{eq:emp-cov-iid}
    \mathrm{EmpCov}(\beta) := \frac{1}{N_2} \sum_{i=N_1 + 1}^{N_1 + N_2} \mathbb{I}\left[\widehat{\psi}^{\beta/2}(P_{N_1})(X_i) \leq Y_i \leq \widehat{\psi}^{1 - \beta/2}(P_{N_1})(X_i)\right].
 \end{align}

\paragraph{Results} The empirical risk 
\eqref{eq:emp-risk-iid} results are shown in Table~\ref{tab:iid-simulation-results}. QSL achieved the best (or tied for best) quantile risk for all quantiles and sample sizes. However, the empirical coverage prediction intervals formed using QSL estimates, as shown in Table~\ref{tab:iid-simulation-results-coverage}, did not necessarily perform as well compared to the candidate algorithms.

\begin{table}
    \centering
    \small
    \begin{tabular}{|r|l|rrrrrrr|}
    \hline
    & & \multicolumn{7}{c|}{$\mathrm{EmpRisk}(\alpha)$ } \\
    $N_1$ & Algorithm & $\alpha = 0.025$ & $\alpha = 0.05$ & $\alpha = 0.1$ & $\alpha = 0.5$ & $\alpha = 0.9$ & $\alpha = 0.95$ & $\alpha = 0.975$ \\
    \hline
    250 & GRF & 0.063 & 0.11 & 0.17 & 0.33 & 0.17 & 0.11 & 0.069\\
     & GBM & 0.06 & 0.084 & 0.11 & \textbf{0.16} & 0.1 & 0.079 & 0.058\\
     & QGAM & 0.034 & 0.057 & 0.098 & 0.29 & 0.1 & 0.058 & \textbf{0.035}\\
     & QRNN & 0.052 & 0.083 & 0.13 & 0.32 & 0.13 & 0.083 & 0.052\\
     & QReg & 0.057 & 0.1 & 0.17 & 0.41 & 0.18 & 0.1 & 0.058\\
     & QSL & \textbf{0.033} & \textbf{0.053} & \textbf{0.088} & \textbf{0.16} & \textbf{0.09} & \textbf{0.057} & \textbf{0.035}\\
    \hline
    500 & GRF & 0.058 & 0.095 & 0.15 & 0.27 & 0.16 & 0.098 & 0.061\\
     & GBM & 0.05 & 0.065 & 0.083 & \textbf{0.12} & 0.084 & 0.067 & 0.051\\
     & QGAM & 0.032 & 0.053 & 0.092 & 0.27 & 0.092 & 0.053 & 0.032\\
     & QRNN & 0.041 & 0.072 & 0.12 & 0.32 & 0.12 & 0.073 & 0.042\\
     & QReg & 0.056 & 0.1 & 0.17 & 0.4 & 0.17 & 0.1 & 0.056\\
     & QSL & \textbf{0.028} & \textbf{0.045} & \textbf{0.07} & \textbf{0.12} & \textbf{0.074} & \textbf{0.049} & \textbf{0.031}\\
    \hline
    1000 & GRF & 0.051 & 0.084 & 0.13 & 0.2 & 0.14 & 0.088 & 0.054\\
     & GBM & 0.038 & 0.051 & 0.064 & \textbf{0.092} & 0.063 & 0.051 & 0.04\\
     & QGAM & 0.029 & 0.05 & 0.088 & 0.27 & 0.088 & 0.05 & 0.029\\
     & QRNN & 0.039 & 0.068 & 0.12 & 0.31 & 0.12 & 0.07 & 0.038\\
     & QReg & 0.055 & 0.099 & 0.17 & 0.4 & 0.17 & 0.099 & 0.054\\
     & QSL & \textbf{0.024} & \textbf{0.038} & \textbf{0.057} & \textbf{0.092} & \textbf{0.058} & \textbf{0.041} & \textbf{0.027}\\
    \hline
    \end{tabular}
    \caption{\label{tab:iid-simulation-results} Results for the \iid\ simulation study. The QSL and candidate algorithms were trained on a learning dataset of $N_1$ observations and evaluated \eqref{eq:emp-risk-iid}
 on a testing dataset of $N_2 = 1000$ observations.}
\end{table}

\begin{table}
    \centering
    \begin{tabular}{|r|l|rrr|}
    \hline
    & & \multicolumn{3}{c|}{$\mathrm{EmpCov}(\beta)$} \\
    $N_1$ & Algorithm & $(1 - \beta) = 0.8$ & $(1 - \beta) = 0.9$ & $(1 - \beta) = 0.95$ \\
    \hline
    250 & GRF & 88.2\% & 94.9\% & 97.2\%\\
     & GBM & 56.1\% & 70.7\% & 81.7\%\\
     & QGAM & 87.9\% & 96.9\% & 99.3\%\\
     & QRNN & 75.1\% & 83.7\% & 87.2\%\\
     & QReg & \textbf{78.8\%} & \textbf{88.4\%} & \textbf{93.2\%}\\
     & QSL & 82.7\% & 94.8\% & 98.3\%\\
    \hline
    500 & GRF & 89.3\% & 95.3\% & 97.6\%\\
     & GBM & 53\% & 67.1\% & 79.3\%\\
     & QGAM & 89.8\% & 97.8\% & 99.7\%\\
     & QRNN & 77.6\% & 86.8\% & 91.5\%\\
     & QReg & \textbf{79.1\%} & \textbf{89.1\%} & \textbf{93.8\%}\\
     & QSL & 77.2\% & 92.8\% & 98.1\%\\
    \hline
    1000 & GRF & 92\% & 96.7\% & 98.5\%\\
     & GBM & 53.1\% & 66.5\% & 78.6\%\\
     & QGAM & 89.2\% & 97.6\% & 99.7\%\\
     & QRNN & 79.1\% & 88.6\% & 93.6\%\\
     & QReg & \textbf{79.9\%} & 89.5\% & \textbf{94.4\%}\\
     & QSL & 73\% & \textbf{89.7\%} & 97.4\%\\
    \hline
    \end{tabular}
    \caption{\label{tab:iid-simulation-results-coverage} Results for the \iid\ simulation study. The QSL and candidate algorithms were trained on a learning dataset of $N_1$ observations and evaluated \eqref{eq:emp-cov-iid}
 on a testing dataset of $N_2 = 1000$ observations.}
\end{table}

\subsection{Sequential setting}
For the sequential setting we augmented the data generating process from the \iid\ simulation study to induce temporal dependence. We simulated $Y_t$, $t = 1, \dots, T = 2000$ following
\begin{align}
    Y_t = \sin(2X_1) + |X_2| - 0.5X_1 X_3 + \lfloor X_4 \rfloor + \epsilon_t,
\end{align}
where $(\epsilon_t)_{t \leq T}$ is now drawn from an AR(1) process:
\begin{align}
    \epsilon_1 &\sim N(0, \sigma^2 / (1 - \rho^2)) \quad \text{ and, for $1 < t \leq T$, } \quad \epsilon_t  \sim N(\rho \cdot \epsilon_{t-1}, \sigma^2),
\end{align}
with $\rho \in (0, 1)$ an autoregressive parameter and $\sigma > 0$ a scale parameter. 

We estimated the same set of $\alpha$-quantiles as before for each simulated dataset using separate online QSLs. The candidate algorithms included gradient boosting machines, quantile regression, quantile neural networks, quantile random forests, and quantile generalized additive models (see Table~\ref{tab:base_learners}). To save computational time, each of the candidate algorithms was fit once using a single training set of all observations from $t = 1$ to $t = T / 2 = 1000$. This amounts to substituting $P_{T/2}$ for $P_{\tau-1}$ in 
\eqref{eq:online-empirical-risk}.

The empirical risk and coverage of the candidate algorithms are defined as in \eqref{eq:emp-risk-iid} and \eqref{eq:emp-cov-iid}, substituting $T/2$ for $N_1$ and $N_2$. The empirical risk and coverage of the online algorithms (QSL, EWA, and BOA) is defined differently as they are updated for each $t$. For an online algorithm $\widehat{\psi}^\alpha$ the final empirical risk is defined as
\begin{align}
    \label{eq:emp-risk-online}
    \mathrm{EmpRisk}(\alpha) := \frac{1}{T/2} \sum_{t=T/2 + 1}^{T} L^\alpha(\widehat{\psi}_t^\alpha)(X_{t}, Y_{t}),
\end{align}
where $\widehat{\psi}^\alpha_t$ is the output of the algorithm using all data before time $t$.
Similarly, the final empirical coverage for $(1 - \beta) \times 100\%$ prediction intervals formed from algorithms $\widehat{\psi}^{\beta / 2}$ and $\widehat{\psi}^{1 - \beta/2}$ is defined as
\begin{align}
    \label{eq:emp-cov-online}
    \mathrm{EmpCov}(\beta) := \frac{1}{T/2} \sum_{t=T/2 + 1}^{T} \mathbb{I}\left[ \widehat{\psi}_t^{\beta / 2}(X_{t}, Y_{t}) \leq Y_{t} \leq \widehat{\psi}_t^{1 - \beta / 2}(X_{t}, Y_{t}) \right],
\end{align}
where $\widehat{\psi}_t^{\beta / 2}$ and $\widehat{\psi}_{t}^{1 - \beta / 2}$ are the output of the algorithms $\widehat{\psi}^{\beta / 2}$ and $\widehat{\psi}^{1 - \beta / 2}$ using data before time $t$, respectively.

\paragraph{Results} The online empirical risk 
\eqref{eq:emp-risk-online} results are shown in Table~\ref{tab:online-simulation-results-risk}. The algorithms yielded similar empirical risks, with QSL having slightly lower risks for most quantiles and settings of $\rho$. The prediction intervals formed using the quantile estimates from each algorithm tended to undercover (see Table~\ref{tab:online-simulation-results-coverage}), especially for QSL. 

\begin{table}
    \centering
    \small
    \begin{tabular}{|l|l|rrrrrrr|}
    \hline
    & & \multicolumn{7}{c|}{$\mathrm{EmpRisk}(\alpha)$ } \\
    $\rho$& Algorithm & $\alpha = 0.025$ & $\alpha = 0.05$ & $\alpha = 0.1$ & $\alpha = 0.5$ & $\alpha = 0.9$ & $\alpha = 0.95$ & $\alpha = 0.975$\\
    \hline
    0 & BOA & 0.024 & 0.038 & 0.056 & 0.3 & 0.055 & 0.04 & 0.028\\
     & EWA & 0.024 & 0.038 & 0.057 & 0.29 & 0.056 & 0.04 & 0.027\\
     & QSL & \textbf{0.023} & \textbf{0.036} & \textbf{0.055} & \textbf{0.27} & \textbf{0.054} & \textbf{0.038} & \textbf{0.026}\\
    \hline
    0.5 & BOA & 0.024 & 0.038 & 0.058 & 0.3 & 0.058 & 0.041 & 0.028\\
     & EWA & 0.024 & 0.038 & 0.059 & 0.3 & 0.059 & 0.041 & 0.028\\
     & QSL & \textbf{0.023} & \textbf{0.036} & \textbf{0.056} & \textbf{0.28} & \textbf{0.056} & \textbf{0.039} & \textbf{0.026}\\
    \hline
    0.9 & BOA & \textbf{0.03} & 0.047 & 0.073 & 0.38 & 0.072 & 0.048 & 0.031\\
     & EWA & \textbf{0.03} & 0.046 & \textbf{0.072} & 0.36 & 0.072 & 0.048 & 0.03\\
     & QSL & \textbf{0.03} & \textbf{0.045} & \textbf{0.072} & \textbf{0.34} & \textbf{0.071} & \textbf{0.047} & \textbf{0.029}\\
    \hline
    0.99 & BOA & 0.068 & 0.084 & 0.14 & 0.61 & 0.14 & 0.082 & 0.048\\
     & EWA & \textbf{0.067} & \textbf{0.078} & \textbf{0.13} & \textbf{0.56} & \textbf{0.13} & \textbf{0.078} & \textbf{0.047}\\
     & QSL & 0.069 & 0.086 & 0.14 & 0.58 & 0.14 & 0.083 & 0.049\\
    \hline
    \end{tabular}
    \caption{\label{tab:online-simulation-results-risk} Results for the online simulation study in terms of empirical risk \eqref{eq:emp-risk-online}.}
\end{table}

\begin{table}
    \centering
    \begin{tabular}{|l|l|rrr|}
    \hline
    & & \multicolumn{3}{c|}{$\mathrm{EmpCov}(\beta)$} \\
    $\rho $& Algorithm & $(1 - \beta) = 0.8$ & $(1 - \beta) = 0.9$ & $(1 - \beta) = 0.95$ \\
    \hline
    0 & BOA & \textbf{71.8\%} & \textbf{88.2\%} & \textbf{93.5\%}\\
     & EWA & 71.7\% & 86.9\% & 92.1\%\\
     & QSL & 70.9\% & 86.4\% & 91.7\%\\
    \hline
    0.5 & BOA & \textbf{72.1\%} & \textbf{88.4\%} & \textbf{93.6\%}\\
     & EWA & 71.9\% & 87\% & 92.1\%\\
     & QSL & 70.9\% & 86.5\% & 92\%\\
    \hline
    0.9 & BOA & \textbf{73.1\%} & \textbf{88.7\%} & \textbf{92.9\%}\\
     & EWA & 72.8\% & 87.1\% & 91.9\%\\
     & QSL & 71.7\% & 86.5\% & 91.5\%\\
    \hline
    0.99 & BOA & 73.3\% & 86.3\% & 87.3\%\\
     & EWA & \textbf{73.9\%} & \textbf{86.5\%} & \textbf{87.4\%}\\
     & QSL & 71.2\% & 83.8\% & 86\%\\
    \hline
    \end{tabular}
    \caption{\label{tab:online-simulation-results-coverage} Results for the online simulation study in terms of empirical coverage \eqref{eq:emp-cov-online}.}
\end{table}

\section{Case Studies}
\label{section:case-studies}
In this section we present two case studies based on solar energy applications, one for the \iid\ setting and one for the online setting.

\subsection{Perovskite energy formation and bandgap prediction}
\label{section:case-study-perovskite}
A critical component of photovoltaic (PV) cells is the material used for the light-absorbing semiconductor layer. The use of materials with perovskite crystal structures has been the subject of significant recent research, leading to the energy efficiency of perovskite-based PV cells increasing rapidly from 3.8\% in 2009 to 26.08\% in 2023 \citep{kojima2009perovskites,park2023perovskite}. While an immense number of compounds exhibit the perovskite crystal structure, not all are useful for solar applications. As experimentally determining the relevant properties of a perovskite compound is resource-intensive, there is significant interest in developing methods to screen for compounds that are likely to have desirable qualities. One method is to use density functional theory (DFT), a method for estimating the physical properties of a compound \citep{hohenberg1964dft, kohn1965dft,jones2015dft}.  However, the computational intensity of DFT makes it scale poorly to large numbers of candidate perovskites, motivating research into the use of machine learning techniques to approximate DFT outputs as an initial screening step \citep{chenebuah2021perovskites}. 

In this case study, we focus on predicting the DFT-output \textit{formation energy} and \textit{energy bandgap} of a perovskite material using data made available by \cite{chenebuah2021perovskites}. Finding the formation energy of a material is useful as it is related to its stability, and determining the energy bandgap is useful as it has a direct relationship with PV efficiency. \cite{chenebuah2021perovskites} applied multiple machine learning algorithms to predict formation energy and energy bandgap using data from 1,453 perovskite materials gathered from the Materials Project database \citep{jain2013materialsproject}. Predictors include element-based features, stability features, and crystallographic features; we refer to \cite[Table~2]{chenebuah2021perovskites} for a full description. Their methods focus on point predictions of formation energy and bandgap; in this case study, we extend their results by forming both point and interval predictions based on quantile estimation.

Formally, let $Y_i$, $i = 1, \dots, N$ be the DFT formation energy or energy bandgap and $\bm{X}_{i}$ a set of 56 covariates (two of the original covariates from \cite{chenebuah2021perovskites} were removed as they were almost perfectly collinear with other covariates). We assume that the couples $(X_i, Y_i)$, $i = 1, \dots, N$ are drawn \iid\ from a probability law. Our goal is to estimate the conditional median and conditional $\alpha$-quantiles (with $\alpha \in \{0.025, 0.05, 0.1, 0.5, 0.9, 0.95, 0.975 \}$) of the outcome conditional on covariates.
 Four candidate algorithms for the Super Learner ensemble were included in the ensemble: generalized random forests (GRF), directional random forests (DRF), gradient boosting machines (GBM), quantile neural networks (QRNN, with 2 hidden layers), and quantile regression (see Table~\ref{tab:base_learners}).

\paragraph{Results}
The case study results are presented in Table~\ref{tab:perovskite-results}. For the formation energy and energy bandgap the QSL achieved the lowest cross-validated quantile risk for all quantiles. In addition, the cross-validated empirical coverage of the QSL prediction intervals were the closest to the nominal level. For the energy bandgap, the QSL had the lowest or tied for the lowest empirical risk for 6 out of the 7 estimated quantiles. The QSL 90\% and 95\% prediction intervals were the closest to having the optimal empirical coverage, although the 80\% prediction interval undercovered relative to other methods.

\begin{table}
    \centering
    \begin{tabular}{|llllllll|}
    \hline
    & \multicolumn{7}{c|}{$\mathrm{EmpRisk}(\alpha)$} \\
    Algorithm & $\alpha = 0.025$ & $\alpha = 0.05$ & $\alpha = 0.1$ & $\alpha = 0.5$ & $\alpha = 0.9$ & $\alpha = 0.95$ & $\alpha = 0.975$ \\
    \hline
    \multicolumn{8}{|l|}{\textit{Formation Energy}} \\
    DRF & 0.026 & 0.044 & 0.068 & 0.13 & 0.079 & 0.053 & 0.035\\
    GBM & 0.024 & 0.037 & 0.049 & 0.078 & 0.055 & 0.042 & 0.029\\
    GRF & 0.029 & 0.048 & 0.07 & 0.12 & 0.073 & 0.051 & 0.032\\
    QRNN & 0.02 & 0.024 & 0.035 & 0.079 & 0.036 & 0.026 & \textbf{0.015}\\
    QReg & 0.017 & 0.026 & 0.042 & 0.094 & 0.044 & 0.027 & 0.017\\
    QSL & \textbf{0.012} & \textbf{0.021} & \textbf{0.034} & \textbf{0.065} & \textbf{0.035} & \textbf{0.021} & \textbf{0.015}\\
    \multicolumn{8}{|l|}{} \\
    \multicolumn{8}{|l|}{\textit{Energy Bandgap}} \\
    DRF & \textbf{0.043} & \textbf{0.086} & 0.17 & 0.4 & 0.2 & 0.13 & 0.077\\
    GBM & 0.047 & 0.095 & 0.17 & \textbf{0.3} & 0.16 & 0.1 & 0.067\\
    GRF & \textbf{0.043} & \textbf{0.086} & 0.17 & 0.38 & 0.19 & 0.12 & 0.072\\
    QRNN & 0.051 & 0.089 & 0.17 & 0.39 & 0.18 & 0.12 & 0.082\\
    QReg & 0.045 & 0.091 & 0.18 & 0.46 & 0.2 & 0.12 & 0.063\\
    QSL & \textbf{0.043} & 0.088 & \textbf{0.16} & \textbf{0.3} & \textbf{0.14} & \textbf{0.089} & \textbf{0.054}\\
    \hline
    \end{tabular}
    \caption{\label{tab:perovskite-results} Empirical risk \eqref{eq:emp-risk-iid} results from the perovskite case study (see Section \ref{section:case-study-perovskite}). The lowest empirical risk for each task and quantile is bolded. In the case of a tie, all tied algorithms are bolded.}
\end{table}

\begin{table}
    \centering
    \begin{tabular}{|lccc|}
    \hline
    & \multicolumn{3}{c|}{$\mathrm{EmpCov}(\beta)$} \\
    Algorithm & (1 - $\beta) = 0.8$ & $(1 - \beta) = 0.9$ & $(1 - \beta) = 0.95$ \\
    \hline
    \multicolumn{4}{|l|}{\textit{Formation Energy}} \\
    DRF & 94.2\% & 97.8\% & 99\%\\
    GBM & 57.9\% & 71.6\% & 84.4\%\\
    GRF & 94.6\% & 97.4\% & 98.7\%\\
    QRNN & 70.5\% & 78.2\% & 82\%\\
    QReg & 76.1\% & 86\% & 89.5\%\\
    QSL & \textbf{79.9\%} & \textbf{90.7\%} & \textbf{96.4\%}\\
    \multicolumn{4}{|l|}{} \\
    \multicolumn{4}{|l|}{\textit{Energy Bandgap}} \\
    DRF & 96.5\% & 98.5\% & 99.4\%\\
    GBM & 59.9\% & 74.9\% & 83.7\%\\
    GRF & 95.6\% & 98.4\% & 99.5\%\\
    QRNN & 74.7\% & 82\% & 83.7\%\\
    QReg & \textbf{76.4\%} & 86.2\% & 88.4\%\\
    QSL & 68.8\% & \textbf{90.1\%} & \textbf{97.2\%}\\
    \hline
    \end{tabular}
    \caption{\label{tab:perovskite-results-coverage} Empirical coverage \eqref{eq:emp-cov-iid} results for $(1 - \beta)\times 100\%$ prediction intervals from the perovskite case study (see Section \ref{section:case-study-perovskite}). The empirical coverage for each task and $\beta$ value are in bold. In the case of a tie, all tied algorithms are bolded.}
\end{table}

\subsection{Post-processing solar irradiance forecats}
\label{section:case-study-forecasting}
Solar irradiance is one of the principal variables influencing photovoltaic power output \citep{ahmed2020pv}. Short-term forecasts of solar irradiance are used to predict solar output, which aids electrical grid integration of solar power \citep{lorenz2011pv}. As a case study, we apply online quantile Super Learning to generate point predictions and well-calibrated prediction intervals for global horizontal irradiance (GHI) one day in advance, following the case study and data made available by \cite{wang2022solar}. 

As solar irradiance is mainly determined by local meteorological conditions, an important input for GHI forecasts are the outputs of weather prediction models. Numerical Weather Prediction (NWP) is typically based on highly complex deterministic models that output forecasts for a set of meteorological variables over a grid covering a geographical region (or the entire world). Multiple versions of a model are run with slightly perturbed initial conditions to yield a range of plausible weather trajectories which, taken together, are referred to as a \textit{dynamic ensemble} \citep{du2019weatherensemble}. While dynamic ensemble predictions cover a range of possible scenarios, they are not necessarily well-calibrated in a probabilistic sense \citep{schulz2021nwp}. For example, in the solar forecasting context, the dispersion of GHI forecasts across the members of a dynamic ensemble may not accurately reflect the variability in the eventually observed GHI. This mismatch motivates post-processing the dynamic ensemble forecasts with an algorithm that produces well-calibrated density estimates or prediction intervals based on the forecasts. If it is not known a-priori which algorithm will perform best for this task, as is almost always the case, then forming ensembles is warranted. It may also be advantageous to regularly update the parameters of the post-processing algorithm with new data as they become available, which leads naturally to the use of online Super Learning. 

For this case study, ground truth observations of GHI are taken from satellite measurements available from the National Solar Radiation Data Base (NSRDB; \cite{sengupta2018nsrdb}). As NWP input, we use archived forecasts produced by the European Centre for Medium-Range Weather Forecasts (ECMWF) Ensemble Prediction System. \cite{wang2022solar} released a subset of historical forecasts covering much of North America and Europe from 2017-2020 at horizons from 0h-90h in advance. Following their case study, and using the case study dataset they released, we focus on predicting GHI at 7 locations in the continental United States using one-day ahead ECMWF forecasts. 

Formally, let $Y_t$, $t = 1, \dots, T$ be the observed satellite measurement of GHI at 13:00 local time on day~$t$ at a single location. Let $X_{t,i}$  be the one-day ahead NWP forecasts of GHI covering the same location, where $i = 1, \dots, 50$ indexes the NWP ensemble members. We use as additional covariates the solar zenith angle $Z_t$ and the one-day lagged GHI observation $Y_{t-1}$. Our goal is to estimate conditional $10\%$, $50\%$, and $90\%$ quantiles of GHI separately for each location using quantile Super Learning. To ensure that each of the candidate learners have enough data to produce reasonable predictions, the online learning procedure is started on January 1, 2020, with data from 2017-2019 used as initial training data. Subsequently, each of the candidate algorithms is re-trained after each data point becomes available. The candidate algorithms are as follows (see Table~\ref{tab:base_learners} for the \texttt{R} packages we relied on):
\begin{itemize}
    \item Quantile regression 1: Quantile regression with covariates $X_{t,i}$, $i = 1, \dots, 50$.
    \item Quantile regression 2: Quantile regression with covariates $X_{t,i}$, $i = 1, \dots, 50$, $Z_t$, and $Y_{t-1}$. 
    \item GBM: Gradient Boosting Machines with covariates $X_{t,i}$, $i = 1, \dots, 50$, $Z_t$, and $Y_{t-1}$ trained with $500$ trees. 
    \item GRF: Generalized Random Forests with covariates $X_{t,i}$, $i = 1, \dots, 50$, $Z_t$, and $Y_{t-1}$.
    \item QGAM: Quantile Generalized Additive Models with covariates $X_{t,i}$, $i = 1, \dots, 50$, $Z_t$, and $Y_{t-1}$. Spline smooths were used to estimate the association between $Z_t$ and $Y_{t-1}$ and the outcome.
    \item QRNN: Quantile recurrent neural networks with covariates $X_{t,i}$, $i = 1, \dots, 50$, $Z_t$, and $Y_{t-1}$. 
\end{itemize}

The performance of the methods are compared by their empirical risk at time $T$ \eqref{eq:online-empirical-risk}. We also evaluate the empirical coverage of the prediction intervals formed from the 10\% and 90\% quantile forecasts of each method.

\paragraph{Results} The results of applying the quantile Super Learner to the case study dataset are shown in Table~\ref{tab:online-results-1}. The QSL had the lowest empirical risks in most cases for all but the 50\% quantile. For the 50\% quantile, the QSL tied or had slightly larger empirical risks than the EWA and BOA algorithms. All of the algorithms yielded prediction intervals that were close to the nominal level, with quantile Super Learner having the best-performing intervals (or tied for best) for five of the seven locations.

\begin{table}
    \centering
    \begin{tabular}{|ll|ccccccc|}
    \hline
    & & \multicolumn{7}{c|}{$\mathrm{EmpRisk}(\alpha)$ by location} \\
    $\alpha$-quantile & Method & BON & DRA & FPK & GWN & PSU & SXF & TBL \\
    \hline
    0.025 & BOA & 8.66 & 5.44 & \textbf{7.27} & \textbf{6.53} & 6.79 & 6.28 & 7.87\\
 & EWA & \textbf{7.83} & \textbf{5.38} & 7.82 & 6.91 & 7.34 & 6.77 & 8.04\\
 & QSL & 8.04 & 5.46 & 7.83 & 6.88 & \textbf{6.26} & \textbf{6.21} & \textbf{7.24}\\
\hline
0.05 & BOA & 12.7 & 7.82 & \textbf{11.7} & \textbf{10.9} & 10.2 & 10.9 & \textbf{12.1}\\
 & EWA & 12.7 & 7.97 & 12.2 & 11.3 & 11.9 & 11 & 12.3\\
 & QSL & \textbf{12.5} & \textbf{7.6} & 12.3 & \textbf{10.9} & \textbf{9.95} & \textbf{10.2} & 12.2\\
\hline
0.1 & BOA & 18.7 & 10.3 & 17.5 & 17.2 & 16.8 & 16.3 & \textbf{18.2}\\
 & EWA & 19 & 10.6 & 17.5 & 17.2 & 17.1 & 16 & \textbf{18.2}\\
 & QSL & \textbf{18.5} & \textbf{9.84} & \textbf{17} & \textbf{16.9} & \textbf{16.3} & \textbf{15.6} & 18.3\\
\hline
0.5 & BOA & \textbf{31.2} & \textbf{13.5} & 28.5 & 32.2 & \textbf{31.9} & \textbf{30.5} & \textbf{28}\\
 & EWA & 31.4 & \textbf{13.5} & 28.9 & 32.2 & 32.1 & 30.8 & 28.7\\
 & QSL & 31.3 & \textbf{13.5} & \textbf{28.4} & \textbf{32.1} & 32.1 & 30.9 & 28.4\\
\hline
0.9 & BOA & 15.2 & \textbf{5.34} & \textbf{12.5} & 14.3 & 14.2 & 13.2 & 11.7\\
 & EWA & 15.1 & 5.57 & 12.8 & 14.3 & 14.3 & 13.6 & 12.4\\
 & QSL & \textbf{14.8} & 5.47 & 12.6 & \textbf{13.9} & \textbf{13.9} & \textbf{12.9} & \textbf{11.2}\\
\hline
0.95 & BOA & 9.11 & 3.36 & 7.74 & 8.3 & 8.44 & 8.39 & 7.39\\
 & EWA & 9.36 & 3.11 & 8.12 & \textbf{8.14} & 8.16 & \textbf{8.17} & \textbf{7.01}\\
 & QSL & \textbf{8.9} & \textbf{3.06} & \textbf{7.72} & 8.19 & \textbf{8.1} & 8.39 & 7.29\\
\hline
0.975 & BOA & 5.4 & 2.39 & \textbf{4.58} & 5.79 & 4.54 & 5.4 & 4.29\\
 & EWA & \textbf{5.12} & 2.4 & 4.6 & 5.54 & 4.9 & \textbf{4.98} & 4.43\\
 & QSL & 5.98 & \textbf{2.14} & 4.96 & \textbf{5.11} & \textbf{4.38} & 5.52 & \textbf{4.25}\\
    \hline
    \end{tabular}
    \caption{\label{tab:online-results-1} Empirical risk of the quantile Super Learner (QSL), Exponentially Weighted Average (EWA), and Bernstein Online Aggregation (BOA) algorithms applied to point  forecasting the $\alpha$-quantiles of ground horizontal irradiance at seven locations in the continental United States (see Section \ref{section:case-study-forecasting}). The lowest empirical risks for each location are in bold. In the case of a tie, all tied algorithms are bolded. The locations are BON, Bondville, Illinois; DRA, Desert Rock, Nevada; FPK, Fort Peck, Montana; GWN, Goodwin Creek, Mississippi; PSU, Pennsylvania State University, Pennsylvania; SXF, Sioux Falls, South Dakota; and TBL, Table Mountain, Boulder, Colorado.}
\end{table}

\begin{table}
    \centering
    \begin{tabular}{|l|l|ccc|}
\hline
    & & \multicolumn{3}{c|}{$\mathrm{EmpCov}(\beta)$} \\
    Location & Method & $(1 - \beta) = 0.8$ & $(1 - \beta) = 0.9$ & $(1 - \beta) = 0.95$ \\
    \hline
    BON & BOA & 82.8\% & 91\% & \textbf{95.6\%}\\
 & EWA & 79.2\% & \textbf{89.6\%} & 94.3\%\\
 & QSL & \textbf{80.1\%} & 88.5\% & 93.2\%\\
\hline
DRA & BOA & 85.8\% & 91.8\% & \textbf{95.1\%}\\
 & EWA & 83.1\% & \textbf{90.7\%} & 95.1\%\\
 & QSL & \textbf{81.4\%} & 87.4\% & 95.1\%\\
\hline
FPK & BOA & 84.7\% & 91.3\% & \textbf{93.7\%}\\
 & EWA & 82.8\% & \textbf{89.9\%} & 92.3\%\\
 & QSL & \textbf{80.9\%} & 88.8\% & 90.7\%\\
\hline
GWN & BOA & 84.4\% & 91.3\% & 95.4\%\\
 & EWA & 82.8\% & \textbf{89.1\%} & 94.3\%\\
 & QSL & \textbf{81.7\%} & 87.7\% & \textbf{94.8\%}\\
\hline
PSU & BOA & 83.9\% & \textbf{91.3\%} & \textbf{93.7\%}\\
 & EWA & \textbf{80.3\%} & 88\% & 92.9\%\\
 & QSL & 77.3\% & 86.3\% & 91.5\%\\
\hline
SXF & BOA & 84.2\% & 91.5\% & \textbf{94.3\%}\\
 & EWA & 80.9\% & \textbf{89.6\%} & 92.3\%\\
 & QSL & \textbf{80.6\%} & 88.5\% & 91\%\\
\hline
TBL & BOA & 85.2\% & 91.8\% & \textbf{95.6\%}\\
 & EWA & \textbf{80.1\%} & \textbf{90.4\%} & 92.6\%\\
 & QSL & 80.9\% & 90.7\% & 92.6\%\\ 
 \hline
    \end{tabular}

    \caption{Empirical coverage of the $(1-\beta) \times 100\%$ prediction intervals of ground horizontal irradiance formed from quantile estimates based on quantile Super Learner (QSL), Exponentially Weighted Average (EWA), and Bernstein Online Aggregation (BOA) at seven locations in the continental United States (see Section \ref{section:case-study-forecasting}). The empirical coverage closest to the desired level for each location are in bold. In the case of a tie, all tied algorithms are bolded. The locations are BON, Bondville, Illinois; DRA, Desert Rock, Nevada; FPK, Fort Peck, Montana; GWN, Goodwin Creek, Mississippi; PSU, Pennsylvania State University, Pennsylvania; SXF, Sioux Falls, South Dakota; and TBL, Table Mountain, Boulder, Colorado.}
    \label{tab:my_label}
\end{table}

One way of understanding the empirical performance of each of the candidate algorithms is to examine how they are weighted in the quantile Super Learner ensemble. Figure \ref{fig:superlearner-solar-weights} shows the weights assigned to each candidate algorithm on the final day ($t = T$) for each of the seven locations.
Interestingly, the algorithms were weighted differently depending on the quantile being estimated. For the 10\% and 90\% quantiles, for example, gradient boosting machines received generally higher weights as compared to the 50\% quantile. In general, no single algorithm dominated across all locations and quantiles, illustrating the utility of ensemble based predictions.

\begin{figure}
    \centering
    \includegraphics[width=0.9\columnwidth]{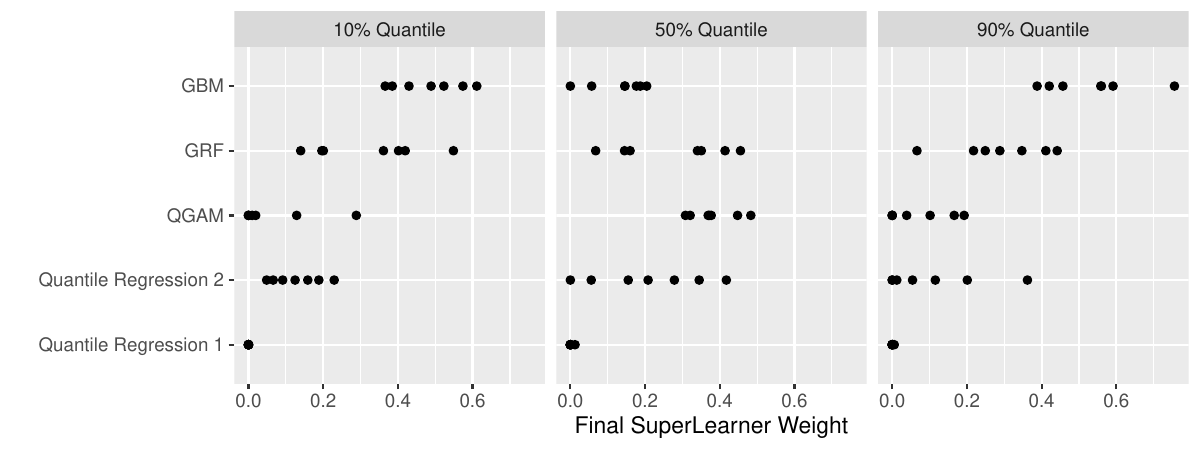}
    \caption{Final weights at time $t = T$ assigned to each of the candidate algorithms by the quantile Super Learner in the solar irradiance forecasting case study (see Section \ref{section:case-study-forecasting}).}
    \label{fig:superlearner-solar-weights}
\end{figure}

\section{Discussion}
\label{section:discussion}

We have proposed a method for conditional quantile estimation, Quantile Super Learning, that combines predictions from multiple candidate algorithms based on their performance measured with respect to a cross-validated empirical risk of the quantile loss function. The approach is theoretically grounded by excess risk bounds that hold with mild assumptions on the data generating distributions in both \iid\ and sequential data scenarios. 
 
Empirically, in simulation studies the QSL consistently achieved the lowest empirical quantile risk compared to the candidate algorithms in an \iid\ setup, showing the possible benefit of using ensemble methods. In the sequential setting, we found that online QSL outperformed Exponentially Weighted Average and Bernstein Online Aggregation algorithms in some settings. In the solar irradiance case study (see Section \ref{section:case-study-forecasting}), QSL tended to achieve lower empirical risk for all but the 50\% quantile. Practically QSL is also easy to use as it does not require specifying any tuning parameters, as opposed to EWA and BOA. However, the computational cost of QSL is significantly higher, as an optimization problem must be solved at each time step, as opposed to EWA and BOA in which the weights are updated by simple closed-form equations.

One possible use case for the QSL is to form prediction intervals by separately estimating lower and upper quantiles. However, in both the simulations and the case studies, while Super Learner consistently performed as well as or better than the candidate algorithms in terms of quantile risk, the prediction intervals formed via Super Learning did not always have the best empirical coverage. This reflects the fact that minimizing the quantile loss function does not necessarily lead to optimal coverage. Thus, Super Learner based prediction intervals based on minimizing quantile losses do not enjoy any performance guarantees. A natural extension of this work would be to post-process Super Learner prediction intervals using techniques from conformal inference, which have strong finite sample results. In the \iid\ case, conformalized quantile regression or the CV+ method could be used \citep{romano2019conformalized,barber2021cvplus}. \cite{fakoor2023qreg} provide a comprehensive empirical comparison of post-processing ensemble quantile estimators, and find conformal inference techniques performed well. In the online case,  Adaptive Conformal Inference techniques can be used to endow quantile based prediction intervals with finite sample coverage guarantees \citep{gibbs2021aci, gibbs2022conformal, zaffran2022agaci,bhatnagar2023saocp}. For all of these conformalization approaches it is advantageous to have good underlying estimates of the conditional quantile function, suggesting the use of ensemble methods to hedge against model misspecification.

We note that the goal and implementation of online Super Learning is similar to that of online aggregation of experts approaches in the online learning literature (see the comprehensive overview by \cite{cesalugosi2006onlinelearning}). Indeed, the online Super Learner functions identically to the Follow the Leader algorithm known to the online learning community. For both approaches, a convex combination of predictions of candidate algorithms is found that minimizes the empirical risk of the ensemble in hindsight. 
What differentiates them is their theoretical contexts and analyses. Follow the Leader is based on an online learning paradigm which makes no assumptions about how the observed data are generated, including even the possibility of data generated adversarially.
It is known to fail in such adversarial settings, and is not favored in the online learning community as there are other algorithms that have better worst-case properties 
\citep[Chapter 3.2]{cesalugosi2006onlinelearning}.
In the case of \iid\ data, a now standard online-to-batch argument can be used to translate results about the performance of online learning algorithms, such as regret bounds, to a statistical context. However, these arguments do not apply when the data are dependent, as is expected in many time-series settings.
In contrast, the typical analysis of Super Learning is based on a statistical point of view in which  the data are posited to follow a probability law for which we assume a stationarity condition, implying that the feature of interest of the conditional law can be learned. Crucially, this rules out the adversarial settings for which Follow the Leader is lacking.

\section*{Acknowledgements}
This research is partially supported by the Agence Nationale de la Recherche as part of the “Investissements d’avenir” program (reference ANR-19-P3IA-0001; PRAIRIE 3IA Institute).

\section*{Competing interests}
The authors declare no competing interests.

\bibliography{bibliography}

\appendix
\section{Theoretical Background}
In this appendix we present additional theoretical context and proofs of the theorems appearing in the main paper. 

\subsection{Independent setting}
Our theoretical treatment of the QSL in the \iid\ setting follows that of \cite{wu2022huber}, who use results from \cite{vandervaart2006oracles} and \cite{vanderLaan2007superlearner}. First, we require the definition of a pair of \textit{Bernstein numbers}, on which rests the theoretical analysis.
\begin{definition}[\cite{vandervaart2006oracles}]
Given a measurable $f : \mathcal{X} \times \mathbb{R} \to \mathbb{R}$, and for any $P \in \mathcal{M}$, call $(M(f), v(f))$ a $P$-pair of \textit{Bernstein numbers} of~$f$ if
\begin{align}
    M(f)^2 P\left( e^{|f| \slash M(f)} - 1 - \frac{|f|}{M(f)} \right) \leq \frac{1}{2} v(f).
\end{align}
\end{definition}
From the following result we see that, for $f$ uniformly bounded, the pair of Bernstein numbers is related to the supremum and variance of $f$:
\begin{lemma}[\cite{vandervaart2006oracles}]
\label{lemma:bernstein-uniformly-bounded}
If $f$ is uniformly bounded then, for any $P \in \mathcal{M}$, $\left(\|f\|_\infty, 1.5 Pf^2\right)$ is a $P$-pair of Bernstein numbers of $f$.
\end{lemma}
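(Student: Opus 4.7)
The plan is to unpack the Bernstein condition and reduce it to a deterministic pointwise inequality that can be integrated against $P$. Writing $M := \|f\|_\infty$, I must verify
\[
M^{2}\, P\!\left(e^{|f|/M} - 1 - \tfrac{|f|}{M}\right) \;\leq\; \tfrac{1}{2}\cdot 1.5\, Pf^{2} \;=\; \tfrac{3}{4}\, Pf^{2}.
\]
Since $|f|/M \in [0,1]$ almost surely under $P$, the problem reduces to controlling the pointwise function $\varphi(x) := e^{x} - 1 - x$ on $[0,1]$ in terms of $x^{2}$.

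The key step is to expand $\varphi$ in its Taylor series: for every $x \geq 0$,
\[
\varphi(x) \;=\; \sum_{k=2}^{\infty} \frac{x^{k}}{k!}.
\]
On $[0,1]$ we have $x^{k} \leq x^{2}$ for all $k \geq 2$, so
\[
\varphi(x) \;\leq\; x^{2} \sum_{k=2}^{\infty} \frac{1}{k!} \;=\; (e-2)\, x^{2}.
\]
Applying this with $x = |f|/M \leq 1$ yields the pointwise inequality
\[
e^{|f|/M} - 1 - \tfrac{|f|}{M} \;\leq\; (e-2)\, \frac{f^{2}}{M^{2}}.
\]

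Multiplying both sides by $M^{2}$ and integrating against $P$ gives
\[
M^{2}\, P\!\left(e^{|f|/M} - 1 - \tfrac{|f|}{M}\right) \;\leq\; (e-2)\, Pf^{2}.
\]
Since $e - 2 \approx 0.718 \leq 0.75 = \tfrac{3}{4}$, the Bernstein condition is satisfied, establishing the lemma.

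There is essentially no serious obstacle here: the argument is a direct Taylor-series bound combined with the crude but sufficient estimate $x^{k}\leq x^{2}$ valid on $[0,1]$. The only thing worth flagging is the mild numerical slack $\tfrac{3}{4} - (e-2) \approx 0.032$, which is what lets the clean constant $1.5$ in the definition of $v(f)$ absorb the series tail; one could in principle sharpen the constant to $2(e-2)$, but the stated form is convenient for downstream use.
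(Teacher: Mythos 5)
Your proof is correct: the reduction to the pointwise bound $e^{x}-1-x\leq (e-2)x^{2}$ on $[0,1]$ via the Taylor series, followed by the numerical comparison $e-2\leq 3/4$, is exactly the standard verification of this fact. The paper itself states the lemma as a citation to \cite{vandervaart2006oracles} and gives no proof, so there is nothing to compare against beyond noting that your argument is the canonical one and is complete (the only degenerate case, $\|f\|_\infty=0$, is vacuous).
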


The following result establishes a Bernstein pair for the quantile loss function.
\begin{lemma}
    \label{lemma:bernstein-pair}
    Let $\psi \in \Psi$ be uniformly bounded. For any $P \in \mathcal{M}$ let $R^\alpha_P(\psi)$ be the risk of $\psi$ as defined in \eqref{eq:risk}. 
    Under assumptions A\ref{assumption:unique-quantiles-iid} and A\ref{assumption:boundedness}, the pair $(\bM, \bv)$ given by
    \begin{align}
        \bM &= \max\{ \alpha, 1 - \alpha \} (\| \psi \|_\infty + C_0) \quad \text{and} \quad
        \bv = 1.5 \times \bM \times R^\alpha_P(\psi)
    \end{align}
    is a $P$-pair of Bernstein numbers of $L^\alpha(\psi)$.
\end{lemma}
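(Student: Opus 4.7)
The plan is to deduce the lemma from Lemma~\ref{lemma:bernstein-uniformly-bounded} applied to $f = L^\alpha(\psi)$, and then sharpen the resulting variance proxy by exploiting the nonnegativity and the pointwise upper bound of the quantile loss.

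First I would establish that $L^\alpha(\psi)$ is uniformly bounded by $\bM$. From the definition, $0 \leq L^\alpha(\psi)(x,y) \leq \max\{\alpha,1-\alpha\}\,|y-\psi(x)|$ for every $(x,y)\in \mathcal{O}$. The triangle inequality, combined with Assumption~A\ref{assumption:boundedness} (which gives $|y| \leq C_0$ $P$-almost surely) and the assumed uniform bound $|\psi(x)| \leq \|\psi\|_\infty$, yields
\[
0 \leq L^\alpha(\psi)(x,y) \leq \max\{\alpha,1-\alpha\}(\|\psi\|_\infty + C_0) = \bM
\]
$P$-almost surely. Note that Assumption~A\ref{assumption:unique-quantiles-iid} is not actually invoked at this step; it plays its role elsewhere, ensuring that the target $\psi^\alpha_{P_0}$ is uniquely defined as a function of $x$.

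Next, I would invoke Lemma~\ref{lemma:bernstein-uniformly-bounded} with $f = L^\alpha(\psi)$, which immediately provides that the pair $(\bM,\,1.5\,P L^\alpha(\psi)^2)$ is a $P$-pair of Bernstein numbers for $L^\alpha(\psi)$. The remaining step is to upper bound $PL^\alpha(\psi)^2$ by a quantity proportional to the risk $R^\alpha_P(\psi)$, rather than leaving an unrelated second-moment term. Since $0 \leq L^\alpha(\psi) \leq \bM$ pointwise, the self-bounding inequality $L^\alpha(\psi)^2 \leq \bM\,L^\alpha(\psi)$ holds pointwise, and integrating against $P$ delivers $PL^\alpha(\psi)^2 \leq \bM\,R^\alpha_P(\psi)$.

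Combining these pieces, and using the fact that the defining inequality of a Bernstein pair is nondecreasing in its second coordinate (increasing the right-hand side preserves the bound), we obtain that $(\bM,\,1.5\,\bM\,R^\alpha_P(\psi)) = (\bM,\bv)$ is also a valid $P$-pair of Bernstein numbers for $L^\alpha(\psi)$. I do not anticipate a genuine obstacle here; the only substantive observation is the self-bounding step, which is what yields a variance proxy controlled by the risk itself rather than by $\bM^2$, precisely the form of Bernstein pair required to feed into the oracle inequality machinery of \cite{vandervaart2006oracles}.
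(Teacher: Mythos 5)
Your proposal is correct and follows essentially the same route as the paper's proof, which bounds $\|L^\alpha(\psi)\|_\infty$ by $\bM$, controls $PL^\alpha(\psi)^2$ via the computation in \citet[Lemma 1]{wu2022huber}, and concludes with Lemma~\ref{lemma:bernstein-uniformly-bounded}. Your explicit self-bounding step $L^\alpha(\psi)^2 \leq \bM\, L^\alpha(\psi)$ (giving $PL^\alpha(\psi)^2 \leq \bM\, R^\alpha_P(\psi)$) is exactly the content the paper delegates to that citation, stated if anything more carefully (as an inequality rather than an equality).
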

\begin{proof}
The proof follows closely that of \citet[Lemma 1]{wu2022huber}; therefore, we only summarize the proof by pointing out the relevant places where it differs. First, see that for all $o = (x, y) \in \mathcal{X} \times [-C_0, C_0]$,
\begin{align}
    0 \leq L^{\alpha}(\psi)(o) &= \alpha\left|y - \psi(x)\right| \mathbb{I}\left[y > \psi(x) \right] + \left(1-\alpha\right)\left|y-\psi(x)\right| \mathbb{I}\left[y\leq \psi(x)\right] \\
    &\leq \max\{\alpha, 1 - \alpha\} (\| \psi \|_\infty + C_0) =: \bM,
\end{align}
which is the first Bernstein number.
Next, we compute $PL^\alpha(\psi)^2$. Following \citet[Lemma 1]{wu2022huber}, we arrive at
\begin{align}
    P L^\alpha(\psi)^2 &= \bM \times R^\alpha_P(\psi) 
    =: \bv,
\end{align}
which is the second Bernstein number. This completes the proof.
\end{proof}

Next, we state an inequality bounding the difference between the cross-validated risk and the oracle risk.
\begin{theorem}[Theorem 2.3, \cite{vandervaart2006oracles}]
\label{theorem:general-inequality}
For any $P \in \mathcal{M}$, for any $\psi \in \Psi$ uniformly bounded, let $(\bM, \bv)$ be a $P$-pair of Bernstein numbers of the function $L^\alpha(\psi)$. Then for any $\delta > 0$ and $1 \leq p \leq 2$,
\begin{align}
    \mathbb{E}_{P_0}\left[\widetilde{R}^\alpha_{n,P}(\widehat{\psi}^\alpha_{\widehat{\kappa}_n})\right] \leq& (1 + 2\delta) \mathbb{E}_{P_0}\left[\widetilde{R}^\alpha_{n,P}(\widehat{\psi}^\alpha_{\widetilde{\kappa}_n})\right] \\
    &+ 16(1 + \delta) \log(1 + K) \\
    &\quad\times \mathbb{E}_{B_n}\left[ \sup_{\psi \in \Psi} \left( \frac{\bM}{n_1} + \left( \frac{\bv}{n^1 R^\alpha_P(\psi)^{2-p}} \right)^{1/p} \left( \frac{1 + \delta}{\delta} \right)^{2 / p - 1} \right)\right],
\end{align}
where $n^1 := \sum_{i=1}^n B_n(i)$. 
\end{theorem}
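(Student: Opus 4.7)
The plan is to derive this finite-sample oracle inequality via a concentration-and-union-bound argument applied to the cross-validated empirical process, combining Bernstein's exponential inequality with the self-bounding structure of the Bernstein pair and a Peter--Paul / Young inequality to generate the multiplicative $(1+2\delta)$ factor. First I would exploit the definition of $\widehat{\kappa}_n$ as the minimizer of $\widehat{R}^\alpha_n$: writing, for any candidate index $k$,
\begin{align}
\widetilde{R}_{n,P}^\alpha(\widehat{\psi}_k^\alpha) - \widehat{R}_n^\alpha(\widehat{\psi}_k^\alpha) = \mathbb{E}_{B_n}\bigl[(P - P_{n,B_n}^1)\, L^\alpha(\widehat{\psi}_k^\alpha(P_{n,B_n}^0))\bigr],
\end{align}
and invoking $\widehat{R}_n^\alpha(\widehat{\psi}^\alpha_{\widehat{\kappa}_n}) \leq \widehat{R}_n^\alpha(\widehat{\psi}^\alpha_{\widetilde{\kappa}_n})$, the task reduces to uniformly controlling the centered empirical process $Z_k := \mathbb{E}_{B_n}[(P - P_{n,B_n}^1)\, L^\alpha(\widehat{\psi}_k^\alpha(P_{n,B_n}^0))]$ over $k \in \llbracket K \rrbracket$, with the gap between the left- and right-hand sides of the target inequality identified with $\mathbb{E}_{P_0}[Z_{\widehat{\kappa}_n} - Z_{\widetilde{\kappa}_n}]$.

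Second, I would condition on $B_n$ and on the training fold $P_{n,B_n}^0$. Conditionally, $P_{n,B_n}^1$ is the empirical measure of $n^1 := \sum_i B_n(i)$ iid observations from $P$, and $L^\alpha(\widehat{\psi}_k^\alpha(P_{n,B_n}^0))$ is a fixed bounded function with Bernstein pair $(M(\psi), v(\psi))$ at $\psi = \widehat{\psi}_k^\alpha(P_{n,B_n}^0)$. The Bernstein exponential inequality then gives a sub-exponential tail bound of the form $\mathbb{P}(Z_k \geq u \mid B_n, P_{n,B_n}^0) \leq \exp(-c\,n^1 u^2 / (v(\psi) + M(\psi) u))$. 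Integrating this tail and taking a union bound over the $K$ candidates produces the prefactor $\log(1+K)$ and the universal constant $16$ in the stated bound; inverting the Bernstein tail yields the two contributions $M(\psi)/n^1$ (deviation-type) and $(v(\psi)/n^1)^{1/p}$ (variance-type) appearing inside the sup.

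Third, and most delicately, I would deploy Young's inequality in the form $ab \leq \delta a^r + C_{\delta,r}\, b^{r/(r-1)}$ to split the variance contribution $(v(\psi)/n^1)^{1/p}$. Because Lemma~\ref{lemma:bernstein-pair} establishes the self-bounding relation $v(\psi) \leq 1.5\, M(\psi)\, R^\alpha_P(\psi)$, this factor carries an implicit $R^\alpha_P(\psi)^{1/p}$, which Young's inequality allows us to absorb as $2\delta\, R^\alpha_P(\psi)$ on the right-hand side, thereby turning the additive deviation bound into the multiplicative $(1 + 2\delta)\mathbb{E}_{P_0}[\widetilde{R}_{n,P}^\alpha(\widehat{\psi}^\alpha_{\widetilde{\kappa}_n})]$. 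The residual $R^\alpha_P(\psi)^{2-p}$ in the denominator and the factor $((1+\delta)/\delta)^{2/p-1}$ are exactly the bookkeeping remnants of Young's inequality with exponents tied to $p \in [1,2]$. Taking $\sup_{\psi \in \Psi}$ lifts the per-algorithm bound to one that is uniform over possible outputs, and finally averaging over $B_n$ reproduces the stated expression.

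The main obstacle will be the careful calibration of the multiple interacting constants: the Bernstein tail constant, the union-bound factor $\log(1+K)$, and the Young exponent must be balanced so that the prefactors come out exactly as $(1+2\delta)$ and $16(1+\delta)$ and the residual exponent is exactly $2/p - 1$. A secondary subtlety is that the bound is in terms of $n^1$ rather than $n$, so the conditional argument must faithfully respect the training/testing split; however, since $(M(\psi), v(\psi))$ is defined pointwise in $\psi$, replacing the (random) Bernstein pair for $\widehat{\psi}_k^\alpha(P_{n,B_n}^0)$ by its $\Psi$-uniform envelope does not inflate the order of the bound and legitimizes the final $\sup_{\psi \in \Psi}$.
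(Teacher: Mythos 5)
The first thing to say is that the paper does not prove this statement at all: it is imported verbatim as Theorem~2.3 of \cite{vandervaart2006oracles} and used as a black box (the only ingredient the paper itself supplies is Lemma~\ref{lemma:bernstein-pair}, which exhibits a concrete Bernstein pair for $L^\alpha(\psi)$ so that the cited theorem applies). So there is no in-paper proof to compare against, and the expected ``proof'' here is simply the citation. Your sketch is, in outline, a faithful reconstruction of how the result is actually established in the source: condition on $B_n$ and on the training fold so that $P^1_{n,B_n}$ is an empirical measure of $n^1$ \iid\ draws, apply a Bernstein-type exponential inequality whose variance proxy is controlled through the Bernstein pair, union-bound over the $K$ candidates to produce the $\log(1+K)$ factor, and use the ERM property of $\widehat{\kappa}_n$ to transfer the bound from the empirical to the oracle risk.

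The one place where your plan would need real repair is the mechanism generating the multiplicative factor $(1+2\delta)$. Your opening decomposition, $\widetilde{R}^\alpha_{n,P}(\widehat{\psi}^\alpha_{\widehat{\kappa}_n}) - \widetilde{R}^\alpha_{n,P}(\widehat{\psi}^\alpha_{\widetilde{\kappa}_n}) \leq Z_{\widehat{\kappa}_n} - Z_{\widetilde{\kappa}_n}$, followed by a uniform tail bound on the $Z_k$, can only yield an \emph{additive} oracle inequality; the variance of $Z_k$ scales with $R^\alpha_P(\psi)$ via $v(\psi) \leq 1.5\,M(\psi)\,R^\alpha_P(\psi)$, and if you bound $|Z_k|$ uniformly you lose exactly the risk-dependence you need. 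In \cite{vandervaart2006oracles} the multiplicative constant is built directly into the deviation inequality: one bounds probabilities of the form $P\bigl(\widetilde{R}_k \geq (1+\delta)\widehat{R}_k + s\bigr)$ and its mirror image, so that setting the deviation threshold proportional to $\delta R^\alpha_P(\psi)$ plus $s$ and invoking $v(\psi) \lesssim M(\psi) R^\alpha_P(\psi)$ collapses the Bernstein exponent to a purely sub-exponential tail in $s$ with rate $n^1\delta/M(\psi)$; integrating that tail and unwinding the two one-sided bounds is what produces $(1+2\delta)$, the factor $((1+\delta)/\delta)^{2/p-1}$, and the explicit constant $16(1+\delta)$. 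Your Young's-inequality device is a legitimate alternative way to trade $R^{(p-1)/p}$ against $\delta R$, but as written it is applied after a decomposition that has already discarded the risk-dependence, and the claims that the union bound ``produces the universal constant $16$'' and that tail inversion yields exactly the exponent $2/p-1$ are asserted rather than derived. None of this is needed for the paper, which (correctly) treats the theorem as cited; but as a standalone proof your proposal is a plan rather than a proof, with the quantitatively hard steps still open.
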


\paragraph{Proof of Theorem \ref{theorem:iid-oracle-inequality}.}
\begin{proof}
The proof follows that of \citet[Theorem 2]{wu2022huber},
with the substitution of $\max\{ 2\alpha C_0, 2(1-\alpha)C_0 \}$ for $C$. 
\end{proof}

\subsection{Online setting}
First, we present the following result that is key to the later analysis.
\begin{theorem}[Variance bound for the quantile loss \citep{steinwart2011}]
    \label{theorem:variance-bound-steinwart-christmann}
    Let $p \in (0, \infty]$, $q \in [0, \infty)$, and 
    \begin{align}
        \vartheta = \min\left\{ \frac{2}{q}, \frac{p}{p + 1} \right\}.
    \end{align}
    Let $Q$ be a distribution for $O_{j,\tau}$, $(j,\tau) \in \mathcal{J}\times\mathbb{N}^*$, that has an $\alpha$-quantile of $p$-average type $q$. For all $\psi \in \Psi$, define $\Delta L^\alpha(\psi) := L^\alpha(\psi) - L^\alpha(\psi^\alpha_{P_0})$. Then, for all $\psi \in \Psi$, it holds that
    \begin{align}
        Q\Delta L^\alpha(\psi)^2  \leq 2^{2-\vartheta} q^\vartheta \| \gamma^{-1} \|_{p, Q_{X_{j,\tau}}} \left( Q \Delta L_\alpha(\psi) \right)^\vartheta.
    \end{align}
\end{theorem}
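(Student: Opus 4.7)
The plan is to prove the variance bound in two stages: first a pointwise \textbf{self-calibration} estimate conditional on $X = x$ using the type-$q$ quantile hypothesis, and then an \textbf{integration} step over the marginal $Q_X$ that invokes Hölder's inequality together with the $p$-moment control on $\gamma^{-1}$ to yield the exponent $\vartheta = \min(2/q,\, p/(p+1))$.

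For the pointwise analysis, fix $x \in \mathcal{X}$ and abbreviate $Q_x := Q(\cdot \mid X = x)$, $t := \psi(x)$, $t^* := \psi^\alpha_{Q_x}$, and $r := t - t^*$. I would start from the standard identity
\begin{equation}
    Q_x \Delta L^\alpha(\psi) \;=\; \int_{t^*}^{t} \bigl[F(s) - \alpha\bigr] ds,
\end{equation}
where $F$ denotes the CDF of $Q_x$, together with its symmetric counterpart when $r < 0$. Definition~\ref{def:quantile-type-q} implies $F(t^* + s) - \alpha \geq b_{Q_x} s^{q-1}$ on $[0, \alpha_{Q_x}]$ and an analogous lower bound for $s < 0$, so integrating yields a pointwise excess-risk lower bound of the form $Q_x \Delta L^\alpha(\psi) \gtrsim (b_{Q_x}/q)\min(|r|, \alpha_{Q_x})^q$. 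In parallel, the quantile loss is $\max(\alpha, 1-\alpha)$-Lipschitz in its first argument, hence $|\Delta L^\alpha(\psi)(x, y)| \leq |r|$ and therefore $Q_x \Delta L^\alpha(\psi)^2 \leq r^2$.

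Combining the two pointwise estimates is the first delicate step. On the ``close'' region $\{|r| \leq \alpha_{Q_x}\}$ one inverts the $q$-power lower bound and substitutes into the squared bound to obtain, for a constant $c_q$ depending only on $q$,
\begin{equation}
    Q_x \Delta L^\alpha(\psi)^2 \;\leq\; c_q\, \gamma(x)^{-2/q}\,\bigl(Q_x \Delta L^\alpha(\psi)\bigr)^{2/q}.
\end{equation}
On the complementary ``far'' region one continues integrating past $\alpha_{Q_x}$ where only the weaker constant lower bound $F(s) - \alpha \geq b_{Q_x}\alpha_{Q_x}^{q-1} = \gamma(x)$ is available, yielding a linear-in-$|r|$ excess risk and hence $Q_x \Delta L^\alpha(\psi)^2 \leq c_q'\, \gamma(x)^{-1}\, Q_x \Delta L^\alpha(\psi)$. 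Both regimes can then be united in the single pointwise inequality $Q_x \Delta L^\alpha(\psi)^2 \leq 2\,q\, \gamma(x)^{-2/q}\bigl(Q_x\Delta L^\alpha(\psi)\bigr)^{2/q}$ after bounding $\gamma(x)^{-1}$ by $\gamma(x)^{-2/q}$ on the range where it is $\leq 1$, using $|\Delta L^\alpha| \leq 2$ on $[-1,1]$ to absorb the other case.

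The final step is to integrate against $Q_X$. Apply Hölder's inequality with conjugate exponents $(s, s')$ chosen so that $2s/q$ equals the order $p$ appearing in $\|\gamma^{-1}\|_{p, Q_X}$ (when $p$ is finite; the case $p = \infty$ is degenerate and gives the pure $2/q$ rate):
\begin{equation}
    Q\Delta L^\alpha(\psi)^2 \;\leq\; 2q\,\|\gamma^{-1}\|_{p, Q_X}^{2/q} \Bigl(\int \bigl(Q_x \Delta L^\alpha(\psi)\bigr)^{2 s'/q} dQ_X(x)\Bigr)^{1/s'}.
\end{equation}
Since $Q_x \Delta L^\alpha(\psi) \leq 2$, a Jensen-type interpolation between the $L^1$ norm and the supremum lets one replace the interior integral by a power of $\int Q_x \Delta L^\alpha(\psi)\, dQ_X(x) = Q\Delta L^\alpha(\psi)$, and the resulting exponent is precisely $\vartheta = \min(2/q, p/(p+1))$: the $2/q$ rate is saturated when the moment constraint disappears ($p=\infty$), while a finite $p$ forces the Hölder penalty $p/(p+1)$. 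Tracking constants through the Hölder and Jensen steps and then consolidating via $a^\vartheta b^{1-\vartheta}$-type bookkeeping will be the main obstacle; it delivers the prefactor $2^{2-\vartheta} q^\vartheta \|\gamma^{-1}\|_{p, Q_X}$ as stated.
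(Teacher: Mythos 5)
First, a point of reference: the paper does not actually prove this statement --- it is imported verbatim as Theorem~2.8 of \citet{steinwart2011} --- so the comparison is with the proof in that source. Your architecture matches it: the pointwise self-calibration estimate obtained from the type-$q$ condition via the identity $Q_x\Delta L^\alpha(\psi)=\int_{t^*}^{t}(F(s)-\alpha)\,ds$, the $1$-Lipschitz property of the pinball loss giving $Q_x\Delta L^\alpha(\psi)^2\le r^2$ with $r:=\psi(x)-\psi^\alpha_{Q_x}$, and a final H\"older step against the $p$-moment of $\gamma^{-1}$. Your two-regime analysis is also sound: the far regime uses the constant lower bound $\gamma(x)$ on $F-\alpha$ beyond $\alpha_{Q_x}$ together with $|r|\le 2$, and the two regimes unify into $|r|^q\le q\,2^{q-1}\gamma(x)^{-1}\,Q_x\Delta L^\alpha(\psi)$.

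The gap is in the integration step, and it is not mere constant-chasing. If you first raise the self-calibration inequality to the power $2/q$, obtaining $Q_x\Delta L^\alpha(\psi)^2\lesssim\gamma(x)^{-2/q}\bigl(Q_x\Delta L^\alpha(\psi)\bigr)^{2/q}$, and then apply H\"older with $2s/q=p$, the conjugate exponent is $s'=pq/(pq-2)$ and the surviving power of $Q\Delta L^\alpha(\psi)$ (after using $Q_x\Delta L^\alpha(\psi)\le 2$ to reduce the inner integrand to first order) is $1/s'=1-2/(pq)$, not $\vartheta=\min\{2/q,\,p/(p+1)\}$; for $p=q=2$ this gives exponent $1/2$ instead of $2/3$, which in the small-excess-risk regime is a strictly weaker variance bound and would degrade the rate delivered by Lemma~\ref{lemma:online-variance-bound} and Theorem~\ref{theorem:online-oracle-inequality}. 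The correct order of operations is to interpolate \emph{before} invoking self-calibration: write $r^2\le 2^{2-q\vartheta}|r|^{q\vartheta}$ (legitimate since $|r|\le 2$ and $q\vartheta\le 2$), substitute $|r|^{q\vartheta}\le (q2^{q-1})^{\vartheta}\gamma(x)^{-\vartheta}\bigl(Q_x\Delta L^\alpha(\psi)\bigr)^{\vartheta}$, and only then integrate. When $\vartheta=p/(p+1)$, H\"older with the conjugate pair $\bigl(1/(1-\vartheta),\,1/\vartheta\bigr)$ turns $\int\gamma^{-\vartheta}\bigl(Q_x\Delta L^\alpha(\psi)\bigr)^{\vartheta}\,dQ_X$ into exactly $\|\gamma^{-1}\|_{p,Q_X}^{\vartheta}\bigl(Q\Delta L^\alpha(\psi)\bigr)^{\vartheta}$ because $\vartheta/(1-\vartheta)=p$; when $\vartheta=2/q<p/(p+1)$ one uses exponents $\bigl(p/\vartheta,\,p/(p-\vartheta)\bigr)$ followed by Jensen, which is valid precisely because $\vartheta p/(p-\vartheta)\le 1$. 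The two admissibility conditions $q\vartheta\le 2$ and $\vartheta\le p/(p+1)$ are exactly what force $\vartheta$ to be the stated minimum, and the constants combine as $2^{2-q\vartheta}(q2^{q-1})^{\vartheta}=2^{2-\vartheta}q^{\vartheta}$. (This derivation yields $\|\gamma^{-1}\|_{p,Q_X}^{\vartheta}$, with the power $\vartheta$, as in the original and as used in Lemma~\ref{lemma:online-variance-bound}; the unexponentiated norm in the theorem statement above appears to be a typo.)
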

In the above statement, $Q_{X_{j,\tau}}$ is the marginal law of $X_{j,\tau}$ under $Q$. Next, we present several necessary lemmas that follow from Assumptions B\ref{assumption:common-support}-B\ref{assumption:online-regularity}.
\begin{lemma}
    \label{lemma:online-boundedness}
    There exists $b_1 > 0$ such that $\sup_{\psi \in \Psi} \| \Delta L^\alpha(\psi)) \|_\infty < b_1$. In addition, there exists $b_2 \in (0, 2b_1]$ such that for all $j \in \mathcal{J}$, $t \geq 1$, and $\psi \in \Psi$, it holds $P_0$-almost surely that
    \begin{align}
        \left| \Delta L^\alpha(\psi)(O_{j,t}) - \mathbb{E}_{P_0}\left[ \Delta L^\alpha(\psi)(O_{j,t}) \mid F_{t-1} \right] \right| \leq b_2.
    \end{align}
\end{lemma}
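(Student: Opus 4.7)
The plan is to prove the two bounds separately, both times relying on the outcome boundedness assumption B\ref{assumption:online-boundedness} and on the fact that the quantile loss $L^\alpha$ is Lipschitz in its functional argument. Throughout, $\psi$ should be understood as restricted to the outputs of the candidate algorithms, which by the hypothesis of Theorem \ref{theorem:online-main-result} are uniformly bounded by $C_0$; without such a restriction the first supremum is trivially infinite since $\Psi$ contains all measurable functions.

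For the first bound, I would first observe that since $|Y_{j,t}| \leq C_0$ $P_0$-almost surely (Assumption B\ref{assumption:online-boundedness}) and $\psi^\alpha_{P_0}$ is a singleton (Assumption B\ref{assumption:unique-quantiles-online}), the true conditional quantile satisfies $\|\psi^\alpha_{P_0}\|_\infty \leq C_0$: an $\alpha$-quantile of a law supported on $[-C_0,C_0]$ necessarily lies in $[-C_0,C_0]$. Next, I would record the elementary Lipschitz property
\begin{align}
\left| L^\alpha(\psi)(x,y) - L^\alpha(\psi')(x,y) \right| \leq \max\{\alpha, 1 - \alpha\} \, |\psi(x) - \psi'(x)|,
\end{align}
which is verified directly from the piecewise-linear definition of $L^\alpha$ (the subgradient with respect to $\psi(x)$ is $-\alpha$ or $(1-\alpha)$). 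Specializing to $\psi' = \psi^\alpha_{P_0}$ and using $\|\psi\|_\infty \vee \|\psi^\alpha_{P_0}\|_\infty \leq C_0$ yields
\begin{align}
\left|\Delta L^\alpha(\psi)(o)\right| \leq \max\{\alpha,1-\alpha\} \cdot 2C_0 < 2C_0 =: b_1
\end{align}
uniformly in $\psi$ and $o$, giving the first conclusion.

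For the second bound, the work has essentially been done: since $|\Delta L^\alpha(\psi)(O_{j,t})| \leq b_1$ $P_0$-almost surely, Jensen's (or just monotonicity of conditional expectation) gives $|\mathbb{E}_{P_0}[\Delta L^\alpha(\psi)(O_{j,t}) \mid F_{t-1}]| \leq b_1$, and the triangle inequality then produces the centered bound $2b_1$. Thus $b_2 := 2b_1 \in (0, 2b_1]$ works uniformly over $j$, $t$, and $\psi$.

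There is no real obstacle here; the lemma is a bookkeeping consequence of Assumption B\ref{assumption:online-boundedness} together with the Lipschitz structure of the pinball loss. The only mild subtlety is the one noted above about implicitly restricting $\Psi$ to functions uniformly bounded by $C_0$ so that the first supremum makes sense; this restriction is precisely what the boundedness hypothesis on the candidate algorithms in Theorem \ref{theorem:online-main-result} guarantees, and it is the only place where that hypothesis is used here.
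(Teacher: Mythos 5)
Your proof is correct and follows essentially the same route as the paper, whose proof consists only of the remark that the bounds follow directly from Assumption B\ref{assumption:online-boundedness}; you simply make explicit the steps the paper leaves implicit (the Lipschitz property of the pinball loss, the bound $\|\psi^\alpha_{P_0}\|_\infty \leq C_0$, and the conditional-expectation/triangle-inequality step). Your observation that $\Psi$ must implicitly be restricted to functions uniformly bounded by $C_0$ (as guaranteed by the hypothesis on the candidate algorithms in Theorem \ref{theorem:online-main-result}) is a legitimate and worthwhile clarification, since the supremum over all of $\Psi$ as literally written would be infinite.
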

\begin{proof}
    The existence of $b_1$ and $b_2$ follows directly from Assumption B\ref{assumption:online-boundedness}. 
\end{proof}

\begin{lemma}
    \label{lemma:online-variance-bound}
    There exist $\beta \in (0, 1]$ and $\nu > 0$ such that for all $j \in \mathcal{J}$, $t \geq 1$ and $\psi \in \Psi$, it holds $P_0$-almost surely that
    \begin{align}
        \mathbb{E}_{P_0}\left[ \left( \Delta L^\alpha(\psi)(O_{j,t}) \right) ^2 \middle| F_{t-1} \right] \leq \nu \left( \mathbb{E}_{P_0}[\Delta L^\alpha (\psi)(O_{j,t}) \mid F_{t-1}] \right)^\beta.
    \end{align}
\end{lemma}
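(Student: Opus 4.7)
The plan is to reduce the statement to a direct application of Theorem~\ref{theorem:variance-bound-steinwart-christmann} (the Steinwart--Christmann variance bound) applied conditionally on the $\sigma$-field $F_{t-1}$.

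Specifically, I would fix $(j, t) \in \mathcal{J} \times \mathbb{N}^*$ and let $Q$ denote the $P_0$-conditional distribution of $O_{j,t}$ given $F_{t-1}$. By Assumption~B\ref{assumption:online-regularity}, for $P_0$-almost all realizations of $F_{t-1}$, this $Q$ has an $\alpha$-quantile of $p$-average type $q$ in the sense of Definition~\ref{def:quantile-p-average-type-q}, with $\|\gamma^{-1}\|_{p, Q_X} \leq \Gamma$. Assumption~B\ref{assumption:online-boundedness} ensures that $\mathrm{Supp}(Q(\cdot \mid X = x)) \subset [-C_0, C_0]$ for $Q_X$-almost all $x$; since Definition~\ref{def:quantile-p-average-type-q} presupposes support in $[-1, 1]$, I would handle the discrepancy by rescaling $Y$ by $1/C_0$. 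Under this rescaling the quantile loss scales linearly in the outcome, so the inequality of Theorem~\ref{theorem:variance-bound-steinwart-christmann} transfers at the cost of a multiplicative constant depending only on $C_0$.

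Applying Theorem~\ref{theorem:variance-bound-steinwart-christmann} to $Q$ then yields $P_0$-almost surely
\begin{align}
    \mathbb{E}_{P_0}\!\left[ \Delta L^\alpha(\psi)(O_{j,t})^2 \,\middle|\, F_{t-1} \right] \leq \kappa_0 \cdot 2^{2-\vartheta} q^\vartheta \Gamma \left( \mathbb{E}_{P_0}\!\left[ \Delta L^\alpha(\psi)(O_{j,t}) \,\middle|\, F_{t-1} \right] \right)^{\vartheta},
\end{align}
where $\vartheta = \min\{2/q, p/(p+1)\}$ and $\kappa_0$ is the constant produced by the rescaling. Setting $\beta := \vartheta$ and $\nu := \kappa_0 \cdot 2^{2-\vartheta} q^\vartheta \Gamma$ delivers the claim. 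Both constants are uniform in $(j, t, \psi)$ because their ingredients ($\Gamma, C_0, p, q$) are global constants from the assumptions. That $\beta \in (0, 1]$ follows from $p/(p+1) \leq 1$ (with strict inequality when $p < \infty$) combined with $\min\{2/q, p/(p+1)\} \leq p/(p+1) \leq 1$.

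The main (and only) technical point is the rescaling step needed to reconcile Definition~\ref{def:quantile-p-average-type-q}'s $[-1, 1]$ support convention with Assumption~B\ref{assumption:online-boundedness}'s $[-C_0, C_0]$ support. Beyond tracking the resulting constant, the argument is a direct conditional invocation of Theorem~\ref{theorem:variance-bound-steinwart-christmann} and does not require any further input from the Markov or stationarity assumptions, which are used elsewhere in the analysis of the online Super Learner rather than in this variance-to-mean inequality.
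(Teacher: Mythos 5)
Your argument is essentially the paper's own proof: condition on $F_{t-1}$, invoke Theorem~\ref{theorem:variance-bound-steinwart-christmann} for the $P_0$-conditional law of $O_{j,t}$ (licensed by Assumption~B\ref{assumption:online-regularity} and its uniform bound $\Gamma$), and take $\beta = \vartheta = \min\{2/q,\,p/(p+1)\}$ with $\nu$ the resulting constant, which is exactly what the paper does. The only difference is your explicit $1/C_0$ rescaling to reconcile the $[-1,1]$ support convention of Definition~\ref{def:quantile-p-average-type-q} with Assumption~B\ref{assumption:online-boundedness} --- a point the paper passes over silently by reading Assumption~B\ref{assumption:online-regularity} as applying directly to the conditional laws --- and this only alters the constant $\nu$, not the structure of the proof.
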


\begin{proof}
By Assumption B\ref{assumption:online-regularity}, the $P_0$-conditional laws of $O_{j,t}$ given $F_{t-1}$ have $\alpha$-quantiles of $p$-average type $q$ for all $t \geq 1$. 
Let $\vartheta := \min\left\{ \frac{2}{q}, \frac{p}{p+1} \right \}$. By \citep[Theorem 2.8,][restated in the Appendix as Theorem~\ref{theorem:variance-bound-steinwart-christmann}]{steinwart2011}, 
\begin{align}
    \mathbb{E}_{P_0}\left[ \left( \Delta L^\alpha(\psi)(O_{j,t}) \right) ^2 \middle| F_{t-1} \right] \leq 2^{2-\vartheta} q^\vartheta \| \gamma_0^{-1} \|_{p}^\vartheta \left( \mathbb{E}_{P_0}[\Delta L^\alpha (\psi)(O_{j,t}) \middle| F_{t-1}] \right)^\vartheta,
\end{align}
where $\gamma_0$ is defined as $\gamma$ in Definition \ref{def:quantile-p-average-type-q} with the choice $Q$ equal to the conditional law of $O_{j,t}$ given $F_{t - 1}$, and $\|\gamma_0^{-1}\|_p$ is the $p$-norm of $\gamma_0^{-1}$ with respect to the marginal law of $X_{j,t}$ under $Q$. 
In view of the definition of $\Gamma$ in Assumption B\ref{assumption:online-regularity}, setting $\nu = 2^{2-\theta} q^\vartheta \Gamma^\vartheta > 0$ and $\beta = \vartheta$ and noting that $0 < \beta < 1$ complete the proof.
\end{proof}

\begin{lemma}
    \label{lemma:online-variance-bound-2}
    There exists $v_1 > 0$ such that, for all $j \in \mathcal{J}$, $t \geq 1$, and $\psi \in \Psi$, it holds $P_0$-almost surely that
    \begin{align}
        \mathbb{V}\mathrm{ar}[\Delta L^\alpha(\psi)(O_{j,t}) \mid F_{t-1}] \leq v_1.
    \end{align}
\end{lemma}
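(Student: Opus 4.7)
The plan is to leverage the uniform boundedness already established in Lemma~\ref{lemma:online-boundedness} and the elementary fact that variance is bounded by the second moment, which bypasses any delicate analysis of the conditional law.

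First, recall from Lemma~\ref{lemma:online-boundedness} that there exists a finite constant $b_1 > 0$ with
\begin{align}
    \sup_{\psi \in \Psi} \| \Delta L^\alpha(\psi) \|_\infty < b_1.
\end{align}
In particular, for every fixed $\psi \in \Psi$, $(j, t) \in \mathcal{J} \times \mathbb{N}^*$, and $P_0$-almost every realization of $F_{t-1}$, the random variable $\Delta L^\alpha(\psi)(O_{j,t})$ takes values in $[-b_1, b_1]$.

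Next, use the standard inequality $\mathbb{V}\mathrm{ar}[Z \mid \mathcal{G}] \leq \mathbb{E}[Z^2 \mid \mathcal{G}]$ applied to $Z = \Delta L^\alpha(\psi)(O_{j,t})$ and $\mathcal{G} = F_{t-1}$. Since $Z^2 \leq b_1^2$ pointwise by the previous step, taking conditional expectation preserves this bound and yields
\begin{align}
    \mathbb{V}\mathrm{ar}[\Delta L^\alpha(\psi)(O_{j,t}) \mid F_{t-1}] \leq \mathbb{E}_{P_0}[\Delta L^\alpha(\psi)(O_{j,t})^2 \mid F_{t-1}] \leq b_1^2
\end{align}
$P_0$-almost surely. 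Setting $v_1 := b_1^2$ then gives a constant that does not depend on $j$, $t$, or $\psi$, which is exactly the desired uniform bound.

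There is no real obstacle here: the claim is essentially a direct corollary of Lemma~\ref{lemma:online-boundedness}, included for later convenience when invoking Bernstein-type concentration arguments in the online oracle inequality. I note that a sharper bound could be obtained by combining Lemma~\ref{lemma:online-variance-bound} with the boundedness of $\mathbb{E}_{P_0}[\Delta L^\alpha(\psi)(O_{j,t}) \mid F_{t-1}]$, giving something of order $\nu (2 b_1)^\beta$, but this refinement is not required for the statement at hand.
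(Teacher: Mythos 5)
Your proof is correct and follows essentially the same route as the paper, whose entire proof is the one-line remark that the result follows from Lemma~\ref{lemma:online-boundedness}; you simply spell out the obvious details (variance bounded by the conditional second moment, which is at most $b_1^2$). Setting $v_1 := b_1^2$ is exactly the intended argument.
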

\begin{proof}
    The result follows from Lemma \ref{lemma:online-boundedness}.
\end{proof}

\begin{theorem}[Oracle Inequality for Online Super Learning \citep{ecoto2021onlinesuperlearner}]
    \label{theorem:online-oracle-inequality}
    Define
    \begin{align}
        v_2 := \frac{3\pi}{2} \left[ \left( \frac{15b_2}{|\mathcal{J}|} \right)^2 + \frac{64v_1}{|\mathcal{J}|} \right].
    \end{align}
    For any $\delta \in (0, 1]$, it holds that
    \begin{align}
        &\mathbb{E}_{P_0}\left[ \widetilde{R}^\alpha_{t,P_0}(\widehat{\psi}^\alpha_{\hat{\kappa}_t}) - \widetilde{R}_{t,P_0}^\alpha(\psi_{P_0}^\alpha) - (1 + 2\delta) \left( \widetilde{R}^\alpha_{t, P_0}(\widehat{\psi}^\alpha_{\widetilde{\kappa}_t}) - \widetilde{R}_{t,P_0}^\alpha(\widehat{\psi}_{P_0}^\alpha) \right)  \right] \\
        &\leq 3 \left( \frac{C_1(\delta)}{t} \log(2KN) \right)^{1/(2-\beta)} + \frac{2 C_2(\delta)}{t} \log(2KN),
    \end{align}
    where $C_1(\delta) := 2^{5-\beta}(1+\delta)^2 \gamma / \delta^\beta$, $C_2(\delta) := 8(1+\delta) b_2/3$, and $N \geq 2$ is chosen such that
    \begin{align}
        N \geq \frac{\beta}{2-\beta} \frac{\log(t) + \log(C_3)}{\log(2)},
    \end{align}
    with $C_3 := (v_2 / \gamma)^{(2 - \beta) / \beta} / (2^{5-\beta} \gamma)$.
\end{theorem}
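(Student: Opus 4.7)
}

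The plan is to adapt the sequential oracle-inequality machinery of \cite{ecoto2021onlinesuperlearner} to the quantile loss, using the four preparatory results we have already assembled: the uniform bound on $\Delta L^\alpha(\psi)$ (Lemma~\ref{lemma:online-boundedness}), the conditional variance-to-mean inequality (Lemma~\ref{lemma:online-variance-bound}), the uniform conditional variance bound (Lemma~\ref{lemma:online-variance-bound-2}), and the Steinwart--Christmann variance inequality (Theorem~\ref{theorem:variance-bound-steinwart-christmann}). The overall strategy is: (i) express the quantity of interest as an empirical process indexed by $k\in\llbracket K\rrbracket$ of a centred martingale, (ii) apply a Freedman-type martingale concentration inequality per candidate to get high-probability control in terms of the predictable quadratic variation, (iii) replace the quadratic variation by the conditional mean via Lemma~\ref{lemma:online-variance-bound}, (iv) peel the resulting bound by dyadic levels of the excess risk to handle the fact that the variance multiplier $\nu(\cdot)^\beta$ depends on the unknown mean, and (v) take a union bound over $k$ and over peeling levels, integrating the resulting tail to pass from probability to expectation.

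More concretely, first I would fix $k$ and introduce, for each $\tau\le t$, the martingale difference
\begin{align}
D_{k,\tau} := \bar L^\alpha(\widehat\psi^\alpha_k(P_{\tau-1}))(\bar O_\tau) - \bar L^\alpha(\psi^\alpha_{P_0})(\bar O_\tau) - \mathbb{E}_{P_0}\bigl[\bar L^\alpha(\widehat\psi^\alpha_k(P_{\tau-1}))(\bar O_\tau) - \bar L^\alpha(\psi^\alpha_{P_0})(\bar O_\tau)\,\big|\,F_{\tau-1}\bigr],
\end{align}
so that $\tfrac{1}{t}\sum_\tau D_{k,\tau} = \widehat R^\alpha_t(\widehat\psi^\alpha_k)-\widehat R^\alpha_t(\psi^\alpha_{P_0}) - \bigl(\widetilde R^\alpha_{t,P_0}(\widehat\psi^\alpha_k)-\widetilde R^\alpha_{t,P_0}(\psi^\alpha_{P_0})\bigr)$. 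By Lemma~\ref{lemma:online-boundedness} each $|D_{k,\tau}|\le b_2$, and by Lemma~\ref{lemma:online-variance-bound-2} the conditional variance is at most $v_1/|\mathcal{J}|$ (the $|\mathcal{J}|$ factor comes from averaging over $j\in\mathcal{J}$ within a batch, independent given the past under the Markov and common-support assumptions B\ref{assumption:common-support}--B\ref{assumption:online-markov}). Freedman's inequality then yields, with probability at least $1-e^{-x}$,
\begin{align}
\frac{1}{t}\sum_{\tau=1}^t D_{k,\tau} \le \sqrt{\frac{2 V_{k,t} x}{t^2}} + \frac{b_2 x}{3t},
\end{align}
where $V_{k,t}$ is the predictable quadratic variation $\sum_\tau \mathbb{E}[D_{k,\tau}^2\mid F_{\tau-1}]$.

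The crucial step, and the one I expect to be the main obstacle, is refining this to the ``$(1+2\delta)$''-form by converting $V_{k,t}$ into an excess risk via Lemma~\ref{lemma:online-variance-bound}: after bounding $V_{k,t}\le \nu\sum_\tau\bigl(\mathbb{E}_{P_0}[\Delta L^\alpha(\widehat\psi^\alpha_k(P_{\tau-1}))(\bar O_\tau)\mid F_{\tau-1}]\bigr)^\beta$ and using Jensen on the concave function $u\mapsto u^\beta$, one gets $V_{k,t}\le \nu t\,\bigl(\widetilde R^\alpha_{t,P_0}(\widehat\psi^\alpha_k)-\widetilde R^\alpha_{t,P_0}(\psi^\alpha_{P_0})\bigr)^\beta$. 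Plugging this back and applying the Young-type inequality $\sqrt{ab}\le \delta a + b/(4\delta)$ (in a form that absorbs the $(\cdot)^\beta$ power) produces an estimate of the shape
\begin{align}
\frac{1}{t}\sum_{\tau} D_{k,\tau} \le 2\delta\bigl(\widetilde R^\alpha_{t,P_0}(\widehat\psi^\alpha_k)-\widetilde R^\alpha_{t,P_0}(\psi^\alpha_{P_0})\bigr) + c_1(\delta)\Bigl(\frac{x}{t}\Bigr)^{1/(2-\beta)} + c_2(\delta)\frac{x}{t},
\end{align}
on an event of probability $\ge 1-e^{-x}$. Handling the dependence of the right-hand side on the unknown excess risk requires a peeling argument over dyadic intervals for that excess risk; this introduces the integer $N$ and replaces $x$ by $x+\log(2KN)$ after a union bound over $k\in\llbracket K\rrbracket$ and peeling level. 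The choice $N\ge \tfrac{\beta}{2-\beta}\tfrac{\log t+\log C_3}{\log 2}$ ensures that the smallest peeling scale is below the minimax noise level so that the truncated remainder is negligible; $C_3$ is calibrated precisely so that the first discarded scale equals the deterministic martingale bound.

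Finally, I would use the definition of $\widehat\kappa_t$ as minimiser of $\widehat R^\alpha_t$ to chain the per-$k$ high-probability bounds: on the good event,
\begin{align}
\widetilde R^\alpha_{t,P_0}(\widehat\psi^\alpha_{\widehat\kappa_t})-\widetilde R^\alpha_{t,P_0}(\psi^\alpha_{P_0}) \le (1+2\delta)\bigl(\widetilde R^\alpha_{t,P_0}(\widehat\psi^\alpha_{\widetilde\kappa_t})-\widetilde R^\alpha_{t,P_0}(\psi^\alpha_{P_0})\bigr) + 3\Bigl(\tfrac{C_1(\delta)}{t}\log(2KN)\Bigr)^{1/(2-\beta)} + \tfrac{2C_2(\delta)}{t}\log(2KN),
\end{align}
using $C_1(\delta)=2^{5-\beta}(1+\delta)^2\nu/\delta^\beta$ and $C_2(\delta)=8(1+\delta)b_2/3$, where the factor $v_2=\tfrac{3\pi}{2}[(15 b_2/|\mathcal J|)^2+64 v_1/|\mathcal J|]$ enters when bounding the variance proxy used to derive the Freedman step (the $\tfrac{3\pi}{2}$ arises from integrating Freedman's tail against the Borel--Cantelli-type series $\sum m^{-2}$). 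Integrating the tail bound against $x$, and noting that the right-hand side is already deterministic once the union bound is accounted for, gives the claimed bound in expectation. The delicate bookkeeping is in making sure that the peeling union bound costs only $\log(2KN)$ and that the two regimes (variance-dominated and bias-dominated) exchange correctly at the threshold $x\asymp (v_2/\nu)^{(2-\beta)/\beta}$.
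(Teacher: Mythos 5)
Your proposal takes a genuinely different route from the paper. The theorem is explicitly attributed to \citep{ecoto2021onlinesuperlearner}, and the paper's entire proof consists of checking that the five hypotheses of Corollary~2 of that reference are met: the Markov and stationarity conditions are supplied by Assumptions~B\ref{assumption:online-markov} and B\ref{assumption:stationarity}, the boundedness and variance conditions by Lemmas~\ref{lemma:online-boundedness} and \ref{lemma:online-variance-bound}, and the last condition follows from boundedness. You instead sketch the internal proof of that cited corollary: martingale-difference decomposition, a Freedman-type inequality, conversion of the predictable quadratic variation into the excess risk via Lemma~\ref{lemma:online-variance-bound}, dyadic peeling to handle the mean-dependent variance proxy, and a union bound over candidates and peeling levels. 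That is indeed the right machinery and correctly explains where $N$, $C_3$, and the $\log(2KN)$ cost come from, so as a reconstruction of the underlying argument it is sound in outline. What the paper's approach buys is a two-line proof at the cost of importing the result wholesale; what yours buys is self-containedness, at the cost of substantial unverified bookkeeping.

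That said, as written your argument is a plan rather than a proof, and two points deserve flagging. First, you assert that the conditional variance of the batch-averaged martingale difference is at most $v_1/|\mathcal{J}|$ because the locations are ``independent given the past under the Markov and common-support assumptions''; Assumption~B\ref{assumption:online-markov} as stated only says that $Y_{j,\tau}$ given all covariates depends only on $X_{j,\tau}$, which does not by itself give conditional independence across $j$ within a batch, so the $1/|\mathcal{J}|$ variance reduction (and hence the form of $v_2$) needs an explicit justification or an appeal to the corresponding step in the cited work. Second, none of the specific constants in the statement --- $C_1(\delta)$, $C_2(\delta)$, $C_3$, $v_2$, and in particular the factor $3\pi/2$ --- are actually derived; you offer plausible guesses for their provenance, but matching them exactly is precisely the part of the argument that cannot be waved through if you decline to cite the external result. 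If you intend the proof to stand on its own, those computations must be carried out; otherwise the honest proof is the paper's: verify the hypotheses and cite.
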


\begin{proof}
    It suffices to check each of the 5 conditions of \cite[Corollary 2]{ecoto2021onlinesuperlearner}.
    \begin{itemize}
        \item Assumption 1 is satisfied by our Assumption B\ref{assumption:online-markov}.
        \item Assumption 2 is satisfied by our Assumption B\ref{assumption:stationarity}.
        \item Assumption 3 is satisfied according to our Lemma \ref{lemma:online-boundedness}.
        \item Assumption 4 is satisfied according to our Lemma \ref{lemma:online-variance-bound}.
        \item Assumption 5 is satisfied because Assumption 3 is satisfied.
    \end{itemize}
\end{proof}

\paragraph{Proof of Theorem \ref{theorem:online-main-result}}
\begin{proof}
To simplify the exposition, let $\mathrm{ER}_{\tilde{\kappa}} := \widetilde{R}^\alpha_{t, P_0}(\widehat{\psi}^\alpha_{\widetilde{\kappa}_t}) - \widetilde{R}_{t,P_0}^\alpha(\widehat{\psi}_{P_0}^\alpha)$. By Theorem \ref{theorem:online-oracle-inequality},
\begin{align}
        \mathbb{E}_{P_0}\left[ \widetilde{R}^\alpha_{t,P_0}(\widehat{\psi}^\alpha_{\hat{\kappa}_t}) - \widetilde{R}_{t,P_0}^\alpha(\psi_{P_0}^\alpha)  \right] 
        \leq& \mathbb{E}_{P_0}[\mathrm{ER}_{\tilde{\kappa}}] + 2\delta \mathbb{E}_{P_0}[\mathrm{ER}_{\tilde{\kappa}}] + \mathrm{Rem}(\delta)
    \end{align}
    where
    \begin{align}
        \mathrm{Rem}(\delta) := 3\left( \frac{C_1(\delta)}{t} \log(2KN) \right)^{\frac{1}{2-\beta}} + \frac{2 C_2(\delta)}{t} \log(2KN).
    \end{align}
    Choose an integer $N \in [ 3\log(t), 4\log(t) ]$, which will necessarily satisfy
    \begin{align}
        N \geq \frac{\beta}{2 - \beta} \frac{\log(t) + \log(C_3)}{\log(2)}
    \end{align}
    provided that $t \geq \max\{2, C_3\}$.
    Fix $\delta = t^{-1/2}$.
    Then 
    \begin{align}
        C_2(\delta) \leq 16b_2/3 \text{ and } 2^{5-\beta} \gamma t^{\frac{\beta}{2}} \leq C_1(\delta) \leq 2^{7-\beta} \gamma t^{\frac{\beta}{2}}.
    \end{align} 
    Therefore, for $t$ large enough,
    \begin{align}
        \mathrm{Rem}(\delta) &\lesssim C_1(t^{-1/2})^{\frac{1}{2-\beta}} \left[ \left(\frac{\log(2KN}{t}\right)^{\frac{1}{2-\beta}} + \frac{\log(2KN)}{t} \right] \\
        &\lesssim C_1(t^{-1/2})^{\frac{1}{2-\beta}} \left[ \left(\frac{\log(8K\log(t)}{t}\right)^{\frac{1}{2-\beta}} + \frac{\log(8K\log(t))}{t} \right] \\
        &\lesssim \left[ C_1(t^{-1/2}) \frac{\log(8K\log(t))}{t}  \right]^{\frac{1}{2-\beta}} \\
        &\lesssim \left[\frac{\log(8K\log(t))}{t^{1-\frac{\beta}{2}}}  \right]^{\frac{1}{2-\beta}} \\
        &\lesssim \frac{\log(8K\log(t))}{t^{1/2}}.
    \end{align}
    Next, note that for large enough $t$ and using the boundedness assumption,
    \begin{align}
        2\delta \mathbb{E}[\mathrm{ER}_{\tilde{\kappa}}] 
         &= 2 t^{-1/2} \mathbb{E}[\mathrm{ER}_{\tilde{\kappa}}] \leq 2t^{-1/2} C_0 \lesssim \frac{\log(8K\log(t))}{t^{1/2}}.
    \end{align}
    Therefore \begin{align}
        2\delta \mathbb{E}[\mathrm{ER}_{\tilde{\kappa}}] + \mathrm{Rem}(\delta) = O\left( \frac{\log(8K\log(t))}{t^{1/2}} \right),
    \end{align}
    which completes the proof.
\end{proof}


\end{document}